
\documentclass[12pt]{article}

\usepackage{amsmath,amssymb,amsthm,cite,mathtools,slashed}
\usepackage[margin=3cm]{geometry}
\usepackage{mathptmx,eucal} 
\usepackage{bm} 
\usepackage[greek,english]{babel}
\languageattribute{greek}{polutoniko}
\newcommand{\griego}[1]{{\selectlanguage{polutonikogreek}#1}} 

\title{The chirality theorem}

\author{
Jos\'e M. Gracia-Bond\'ia,$^1$
Jens Mund$^2$
and Joseph C.~V\'arilly$^3$
\\ [12pt]
{\footnotesize $^1$ Departamento de F\'isica Te\'orica,
Universidad de Zaragoza, Zaragoza 50009, Spain}\\[3pt]
{\footnotesize $^2$ Departamento de F\'isica,
Universidade Federal de Juiz de Fora, Juiz de Fora 36036--900, MG, 
Brasil}\\[3pt]
{\footnotesize $^3$ Escuela de Matem\'atica,
Universidad de Costa Rica, San Jos\'e 11501, Costa Rica}\\[3pt]
}

\date{\today}



\DeclareMathOperator{\supp}{supp}   
\DeclareMathOperator{\T}{T}         
\DeclareMathOperator{\tsum}{{\textstyle\sum}} 

\newcommand{\al}{\alpha}            
\newcommand{\bt}{\beta}             
\newcommand{\dl}{\delta}            
\newcommand{\eps}{\varepsilon}      
\newcommand{\ga}{\gamma}            
\newcommand{\ka}{\kappa}            
\newcommand{\La}{\Lambda}           
\newcommand{\la}{\lambda}           
\newcommand{\sg}{\sigma}            
\newcommand{\Th}{\Theta}            
\newcommand{\vf}{\varphi}           

\newcommand{\bM}{{\mathbb{M}}}      
\newcommand{\bR}{{\mathbb{R}}}      
\newcommand{\bS}{\mathbb{S}}        
\newcommand{\bW}{\mathbb{W}}        

\newcommand{\sD}{\mathcal{D}}       
\newcommand{\sO}{\mathcal{O}}       

\newcommand{\pt}{{\mathrm{p}}}      
\newcommand{\tree}{{\mathrm{tree}}} 

\newcommand{\del}{\partial}         
\newcommand{\delslash}{\slashed\del} 
\newcommand{\otto}{\leftrightarrow} 
\newcommand{\ovl}{\overline}        
\newcommand{\Psibar}{\overline{\Psi}} 
\newcommand{\w}{\wedge}             
\newcommand{\wh}{\widehat}          
\newcommand{\x}{\times}             
\renewcommand{\.}{\cdot}            
\newcommand{\7}{\dagger}            
\newcommand{\8}{\bullet}            

\newcommand{\half}{{\mathchoice{\thalf}{\thalf}{\shalf}{\shalf}}}
\newcommand{\ihalf}{\tfrac{i}{2}}   
\newcommand{\shalf}{{\scriptstyle\frac{1}{2}}} 
\newcommand{\thalf}{\tfrac{1}{2}}   

\bmdefine{\aaa}{a}          
\bmdefine{\bbb}{b}          
\bmdefine{\lll}{l}          
\bmdefine{\pp}{p}           

\newcommand{\llangle}{\langle\mkern-4mu\langle} 
\newcommand{\rrangle}{\rangle\mkern-4mu\rangle} 

\newcommand{\set}[1]{\{\,#1\,\}}     
\newcommand{\sump}{\sideset{}{'}\sum}   
\newcommand{\vev}[1]{\llangle#1\rrangle} 
\newcommand{\word}[1]{\quad\text{#1}\quad} 
\newcommand{\2}[1]{\underset{\star}{#1}} 

\def\wick:#1:{\,\mathopen:#1\mathclose:\,} 

\def\lLa^#1_#2{\Lambda^{#1}{}_{#2}}  
\def\ltLa_#1^#2{\Lambda_{#1}{}^{#2}} 


\newcommand{\twobytwo}[4]{\begin{pmatrix} 
         #1 & #2 \\ #3 & #4 \end{pmatrix}}

\newcommand{\tightldots}{\kern1pt.\kern0.5pt.\kern0.5pt.\kern1pt}


\theoremstyle{plain}
\newtheorem{thm}{Theorem}            
\newtheorem{prop}[thm]{Proposition}  
\newtheorem{lema}[thm]{Lemma}        
\newtheorem{corl}[thm]{Corollary}    
\newtheorem{schol}[thm]{Scholium}    

\numberwithin{equation}{section}


\makeatletter
\renewcommand{\section}{\@startsection{section}{1}{\z@}%
                       {-3.5ex \@plus -1ex \@minus -.2ex}%
                       {2.3ex \@plus.2ex}%
                       {\normalfont\large\bfseries}}
\renewcommand{\subsection}{\@startsection{subsection}{2}{\z@}%
                        {-3.25ex \@plus -1ex \@minus -.2ex}%
                        {1.5ex \@plus .2ex}%
                        {\normalfont\normalsize\bfseries}}
\makeatother

\hyphenation{Casi-mir Casi-mirs dia-go-nal equi-va-lence equi-va-lent
equi-vari-ant ge-ne-ral ge-ne-rate ge-ne-ra-ted ge-o-des-ic
geo-met-ric geo-met-ries geo-met-ry Hamil-ton-ian Her-mit-ian
ma-ni-fold ma-ni-folds Min-kow-ski neigh-bour-hood ope-ra-tor
ope-ra-tors or-tho-go-nal pro-duct pro-ducts qua-drat-ic
re-nor-ma-li-za-tion Rie-mann-ian semi-def-i-nite skew-ad-joint
sum-ma-bi-li-ty sum-ma-ble topo-lo-gi-cal to-po-lo-gy va-cuum}


\begin{document}

\maketitle

\begin{abstract}
\smallskip
We show how chirality of the weak interactions stems from string
independence in the string-local formalism of quantum field theory.
\end{abstract}

\bigskip

\begin{quote}
\begin{flushright}

\griego{Swkr'aths --
<o no~un >'eqwn gewrg'os, ~<wn sperm'atwn k'hdoito ka`i >'egkarpa
bo'uloito gen'esjai, p'otera spoud~h| >`an j'erous e>is >Ad'wnidos
k'hpous >ar~wn\tightldots?}

-- Plato (\textit{Phaidros}, 276b)~\cite{Phaidros}

\end{flushright}
\end{quote}

\section{Introduction}
\label{sec:where-to}

Unanswered questions abound in electroweak theory \cite{Quigg09}. Only
time will tell which ones were prescient, and which born only from
theoretical prejudice~\cite{LynnS15}. A paramount trait of
flavourdynamics is the chiral character of the interactions in which
fermions and the massive vector bosons participate. A literature
search shows that most textbooks dispatch this trait in one word: it
is a \textit{fact}. There are a few exceptions. The book by Peskin and
Schroeder discusses at some length how left-handed and right-handed
components of fermions can come to see (representations of, if you
wish) different gauge groups~\cite[Chap.~19]{PeskinS95}. The
posthumous, reflective book by Bob Marshak \cite[Chaps.~1
and~6]{Marshak93}, discoverer (together with E.~C.~G. Sudarshan) of
the Vector-Axial theory, interestingly elevates the ``fact'' to a
principle, that of chirality invariance, or ``neutrino paradigm''.

Nevertheless, on the face of it, there is a mystery here, setting
flavourdynamics apart from chromodynamics. That cannot be solved by
invoking the Glashow--Weinberg--Salam (GWS) model, which introduces
chirality by hand from the outset.

The aim of this paper is to tackle this riddle through the theory of
string-local quantum fields (SLF). This conceptual framework was
introduced in~\cite{MundSY04,MundSY06}, improving on old proposals by
Mandelstam \cite{Mandelstam62} and Steinmann~\cite{Steinmann84}. It is
largely the brainchild of Schroer~\cite{Schroer16}.

At the considerable price of an extra variable, SL fields appear to
offer advantages over the ordinary sort. We summarily list them here.

\begin{itemize}

\item
The string-local fields evade the theorem that it is impossible
to construct on Hilbert space a \textit{vector} field for photons, and
more generally for corresponding representations associated to higher
fixed-helicity massless particles \cite[Sect.~5.9]{Weinberg95I}.

For this reason, the concept of gauge fades into the background.

\item
Other improved formal properties include a better ultraviolet
behaviour for spin and helicity~$>\half$; this turns out to be same
for all bosons as for scalar particles, and for fermions as for
spin-$\half$ particles.%
\footnote{Arguably, that is inherited from the amazingly good
behaviour of the field strengths themselves, beyond na\"ive power
counting, independently of spin, uncovered not long
ago~\cite{NikolovST14,Leto}.}
The upshot is that perturbative renormalization of SLF models should
take place without calling upon ghost fields, BRS invariance and the
like, since in principle one need not surrender positivity of the
energy and of the state spaces for the physical particles. It is fair
to say, however, that renormalization of theories with SL field theory
is still a work in progress.

\item
The reach of quantum field theory is enlarged, since the (boson and
fermion) Wigner unbounded-helicity particles \cite{Wigner39}, with
Casimirs $P^2 = 0$, $W^2 < 0$, that have no corresponding pointlike
fields~\cite{Yngvason70,IversonM71}, become admitted into the realm of
QFT through SL fields~\cite{MundSY04,MundSY06,Rehren17}.

\item
Furthermore, SLF proves its worth by shedding light on some
phenomenological conundrums of the current theory of fundamental
forces and particles. (Chief among them, after chirality, is the
observation that ``the SM accounts for, but does not explain,
electroweak symmetry breaking''~\cite{Peskin16}.)

\end{itemize}

We are going to show that the physical particle spectrum (charge and
mass structure) of the interaction carriers in the electroweak sector,
including the scalar particle, determines their relative coupling
strengths with the fermion sector entirely, and in particular
\textit{forces} the couplings of the massive bosons to fermions to be
parity-violating. 

In more detail, our input (particle and coupling types) is the 
experimental datum.
\begin{itemize}
\item
The \textit{particle types} are the electron, positron, neutrino and
antineutrino; the massive vector bosons $W_1$, $W_2$ and~$Z$, and the
photon; plus a scalar (Higgs) particle.%
\footnote{It will be enough here to consider just one generation of
leptons: bringing up the full structure of the fermion multiplets only
complicates the proof's notation in a way immaterial to the purpose.}

Their \textit{masses} obey $m_Z > m_W > 0$, and the photon is
massless. The electron and Higgs particle are massive; the masses
$m_e$, $m_\nu$ and $m_H$ are otherwise unconstrained, but are assumed
to be given.

The corresponding \textit{electric charges}: the
$Z$ and Higgs bosons, the neutrino and antineutrino $\nu,\bar\nu$ are
neutral; the electron $e$ and $W_-$ boson have charge~$-1$; the
positron $\bar e$ and $W_+$ have charge~$+1$. 

\item 
The \textit{couplings} are of two types. For the purely bosonic
couplings, see the beginning of Sect.~\ref{sec:bosonization}.

For the couplings between bosons and fermions we make the most general
Ansatz which respects electric charge conservation, Lorentz invariance
and renormalizability (scaling dimension $\leq 4$).
  
Apart from these general restrictions, our sole assumption is that in
photon-fermion couplings, the photon couples only to charged fermions,
so it does not couple to the neutrino or antineutrino; and even this
could be relaxed. All other coupling constants are left open.
\end{itemize}

Our powerful tool is the requirement that physical quantities like the
$\bS$-matrix must be independent of the string direction. This
principle is quite restrictive and, as we show here, in fact fixes all
coupling constants, bar the overall strength. In particular, it turns
out that:
\begin{itemize}
\item
the neutrino is completely chiral in that only left-handed%
\footnote{Or right-handed ones -- the theory of course cannot tell
which.}
neutrinos couple;
\item
the electron also couples in a parity-violating way;
\item
the Higgs particle couples only to scalar (and not to pseudoscalar)
Fermi currents.
\end{itemize}
This is our \textbf{chirality theorem}.

The proof, rigorous within perturbation theory, is achieved entirely
within the string-local scheme. It is simple, in that it requires only
consideration of tree graphs up to second order. Going
\textit{a~posteriori} from our framework to the GWS model for fermions
is both trivial and almost inconsequential; nevertheless, we indicate
how to do it in an appendix.

A valid argument for chirality, with the same outcome as ours, can be
made, and has indeed been made before, within the conventional
framework -- see \cite{DuetschS99,AsteSD99,Scharf16}; we owe these
works a lot. Apparently that proof was scarcely heeded, for reasons
not easy to understand. It is certainly couched in the language of
(the causal version of) gauge theory, keeping its ungainly retinue of
unphysical fields; and there is some circularity in it, since the
Kugo--Ojima asymptotic fields invoked \textit{ab initio} have to be
derived first. Our method provides a cleaner, more ``native'' form.
Still, theirs was a good case, and we are keen to employ new tools to
reclaim~it.

\medskip

The plan of the article is as follows. Section~\ref{sec:on-a-string}
is a pr\'ecis on free string-local fields.
Section~\ref{sec:perturbing} reviews the basics of perturbation theory
and Epstein--Glaser renormalization, as adapted to SLF, and introduces
the simple principle of physical \textbf{string-independence}
governing SLF couplings. The next two sections examine constraints
imposed on couplings with fermions by string independence already at
the first-order level. Section~\ref{sec:well-tempered} displays a
method, due to one of us, to construct time-ordered products involving
SLF for tree diagrams at second order.

Once that has been digested, the rest of the proof, performed in
Section~\ref{sec:at-laeva-malorum}, proceeds by a series of lemmas, of
interest in themselves, whose verifications reduce to fairly
straightforward calculations, entirely determining the couplings. In
particular, chirality of flavour\-dynamics emerges as an inescapable
consequence of string independence, given the mentioned physical
spectrum of intermediate vector bosons.
Section~\ref{sec:our-revels-now-are-ended} is the conclusion.

The supplementary sections deal with a few relevant side questions.
Appendices \ref{app:aliquid-obstat} and~\ref{app:cross-purposes}
furnish computational details. Appendix~\ref{app:BW} verifies locality
for the stringy fields. Appendix~\ref{app:y-ole} manufactures the GWS
model from the ascertained chiral coupling constants.

\section{String-local fields}
\label{sec:on-a-string}

To define the SLF, we start from free Faraday tensor fields on
Minkowski space~$\bM_4$. These can be built from Wigner's spin~$1$ or
helicity~$\pm 1$ unitary, irreducible representations of the
restricted Poincar\'e group~\cite{Wigner39}, by use of appropriate
creation operators $\al_r^\7(p)$ and polarization dreibein or zweibein
$e_r^\mu(p)$, under the form:
\begin{align}
F_a^{\mu\nu}(x) 
&:= \sum_r \int d\mu(p)\, \bigl[
e^{i(px)} \bigl( i p^\mu e_r^\nu(p) - i p^\nu e_r^\mu(p) \bigr) \,
\al_{r,a}^\7(p)
\notag \\
&\hspace{8em}
+ e^{-i(px)} \bigl( -i p^\mu e_r^\nu(p)^* +i p^\nu e_r^\mu(p)^* \bigr)
\, \al_{r,a}(p) \bigr],
\label{eq:Faraday-field} 
\end{align}
where $d\mu(p) := (2\pi)^{-3/2} \,d^3\pp/2E(\pp)$; we use the notation
$(ab) := g_{\la\ka} a^\la b^\ka = a^0b^0 - \aaa\cdot\bbb$ for
Minkowski inner products. Such fields are of the Lorentz
transformation type $(1,0) \oplus (0,1)$ -- see
\cite[Sect.~5.6]{Weinberg95I}. Consult also \cite{Stora05} in this
respect. Free string-local \textit{potential} fields are determined
from the~$F_a$:
\begin{equation}
A_a^\mu(x,l) := \int_0^\infty dt\, F_a^{\mu\la}(x + tl) \,l_\la \,,
\label{eq:A-vector-field} 
\end{equation}
with $l = (l^0,\lll)$ a null vector. By [half-]string we understand
the set of points $\{x + tl\}$, with $t \geq 0$. Each of the $A_a$
lives on the same Fock space as~$F_a$.

The main properties of the potential fields are as follows:
\begin{itemize}
\item
Transversality: $\bigl(l\,A_a(x,l)\bigr) = 0$; and
$\bigl(\del A_a(x,l)\bigr) = 0$ in the massless boson case.%
\footnote{Here and later, $(\del A) = \del_\mu A^\mu$ denotes a 
divergence.}
\item
Pointlike differential:
$\del^\mu A_a^\la(x,l) - \del^\la A_a^\mu(x,l) = F_a^{\mu\la}(x)$, or 
$dA_a=F_a$ for short.
\item
Covariance: let $U$ denote the second quantization of the mentioned
unitary representations of the restricted Poincar\'e group on the
one-particle states. Then
$$
U(a,\La) A_a^\mu(x,l) U^\7(a,\La) 
= A^\la(\La x + a, \La l)\, \ltLa_\la^\mu 
= \bigl( \La^{-1} \bigr)^\mu_{\;\;\la}\, A_a^\la(\La x + a, \La l).
$$
\item
Locality (causality): 
$[A_a^\mu(x,l), A_a^\la(x',l')] = 0$ when the strings $\{x + tl\}$ and
$\{x' + t'l'\}$ are causally disjoint.
\end{itemize}

The first three properties are nearly obvious. The last one is
subtler. It follows from (an easy variant of) the powerful argument
in~\cite{Mund09}, based on modular localization theory, spelled out in
Appendix~\ref{app:BW}.

Explicitly, in terms of~\eqref{eq:Faraday-field}, one finds that:
\begin{align}
A_a^\mu(x,l) 
&= \sum_r \int d\mu(p) \bigl[ e^{i(px)}\,u_r^\mu(p,l)\,\al_{r,a}^\7(p)
+ e^{-i(px)}\, u_r^{\mu}(p,l)^* \,\al_{r,a}(p) \bigr],
\notag \\
\word{with} u_r^\mu(p,l) 
&:= \int_0^\infty dt\, e^{it(pl)} i\bigl( p^\mu e_r^\la(p)
- p^\la e_r^\mu(p) \bigr) l_\la 
= e_r^\mu(p) - p^\mu \,\frac{(e_r(p)\,l)}{(pl)} \,.
\label{eq:A-field-explicit} 
\end{align}
Note that in the massless case, the denominator $(pl)$ may vanish;
nonetheless, $(e_r(p)\,l)/(pl)$ is locally integrable with respect to
the Lorentz-invariant measure $d\mu(p)$. In keeping with the
nomenclature of \cite{MundSY04,MundSY06}, the quantities
$u_r^\mu(p,l)$, $u_r^\mu(p,l)^*$, and similar ones for stringlike or
pointlike fields, are here called \textit{intertwiners}.

In this paper the set $\{F_a\}$ above includes one such field for each
of the physical particles, universally denoted $W^\pm$, $Z$, $\ga$.
For the massive ones, it does prove useful to consider the spinless
string-local \textit{escort}~fields:
\begin{equation}
\phi_b(x,l) := \sum_r \int d\mu(p)\, \biggl[
e^{i(px)} \frac{i(e_r(p)\,l)}{(pl)}\, \al_{r,b}^\7(p)
+ e^{-i(px)} \frac{-i(e_r(p)\,l)^*}{(pl)}\, \al_{r,b}(p) \biggr].
\label{eq:hire-an-escort} 
\end{equation}
We remark that
\begin{equation}
A^\mu_b(x,l) - \del^\mu \phi_b(x,l) =: A^{\pt,\mu}_b(x)
\label{eq:auxiliary-Proca} 
\end{equation}
defines pointlike \textit{Proca} fields, so that $dA^\pt_b = F_b$. All
these fields live on the same Fock spaces as the $F_b$ and have the
same mass. Moreover:
$$
\phi_b(x,l) = \int_0^\infty A^{\pt,\la}_b(x + sl) l_\la \,ds.
$$
Note the relations $(l\,\del\phi_b) = -(l A^\pt_b)$ and 
$$
\del_\mu A_b^\mu(x,l) + m_b^2 \phi_b(x,l) = 0.
$$
The last relation follows directly from \eqref{eq:A-field-explicit}
and~\eqref{eq:hire-an-escort}, since $(p\,e_r(p)) = 0$.

\medskip

Let now $d_l := \sum_\sg dl^\sg(\del/\del l^\sg)$ denote the
differential with respect to the string coordinate. We may introduce
the (form-valued in the string variable) field:
\begin{align}
d_l\phi_b(x,l) = w_b(x,l)
&:= \sum_r \int d\mu(p)\, \biggl[ e^{i(px)} \biggl(
\frac{ie_{r,\sg}(p)}{(pl)} - \frac{ip_\sg(e_r(p)l)}{(pl)^2} \biggr)\,
\al_{r,b}^\7(p)
\notag \\
&\qquad + e^{-i(px)} \biggl( \frac{ie_{r,\sg}(p)}{(pl)}
- \frac{ip_\sg(e_r(p)l)}{(pl)^2} \biggr)^* \, \al_{r,b}(p) \biggr]
\,dl^\sg;
\label{eq:good-form} 
\end{align}
and one obtains
\begin{align*}
\del_\mu w_b 
&= - \sum_r \int d\mu(p)\, \biggl[ e^{i(px)} \biggl(
\frac{p_\mu e_{r,\sg}(p)}{(pl)} - \frac{p_\mu p_\sg(e_r(p)l)}{(pl)^2}
\biggr)\, \al_{r,b}^\7(p)
\\
&\qquad + e^{-i(px)} \biggl( \frac{p_\mu e_{r,\sg}(p)}{(pl)}
- \frac{p_\mu p_\sg(e_r(p)l)}{(pl)^2} \biggr)^* \, \al_{r,b}(p)
\biggr] \,dl^\sg = d_l A^\mu_b\,;
\end{align*}
as well as $d_l w_b := d_l^2\phi_b = 0$.  In the case that $A^\mu_a$
describes a massless field, we just take the second equality in
\eqref{eq:good-form} as \textit{definition} of~$w_a$ 
and $d_l A_\ga^\mu = \del^\mu w_\ga$ still holds.%
\footnote{The form-valued $w_\ga$ suffers from expected infrared
problems. A promising way to deal with them in perturbation theory has
come to light recently~\cite{Duch17}.}

\medskip

We hasten now to exhibit a family of (Wightman) two-point functions
for our fields, of the general form
$$
\vev{\vf(x,l)\,\psi(x',l)} = \frac{1}{(2\pi)^3} \int d^4p\,
e^{-i(p(x-x'))} \dl_+(p^2 - m^2)\, M^{\vf\psi}(p,l)\,;
$$
where any of the two fields $\vf,\psi$, belong to the collection
$$
\set{ F_a^{\mu\nu}(x), \ A_a^\mu(x,l), \ \phi_b(x,l), \
\del^\mu\phi_b(x,l), \ w_a(x,l), \ \del^\mu w_a(x,l)} 
$$
with $a$ running over $(1,2,3,4)$ and $b$ over $(1,2,3)$. We shall
suppress the subindex notation $a,b$ in the rest of this section.
Here $\dl_+(p^2 - m^2) 
= \dl\bigl(p_0 - \sqrt{|\pp|^2 + m^2}\bigr)/2\sqrt{|\pp|^2 + m^2}$ and
$\vev{\text{\,---\,}}$ denotes a vacuum expectation value of the
included operator.

The respective $M^{\vf\psi}$ are computed from the definitions of
the fields. It is enough to note that:
$$
M^{\vf\psi}_{\al\bt} 
:= \sum_r u^{(\vf)}_{r,\al}(p,l)^* \, u^{(\psi)}_{r,\bt}(p,l),
$$
in terms of intertwiners $u^{(\vf)}$, $u^{(\psi)}$ already given. We
get, to begin with,
\begin{subequations}
\label{eq:two-pointers} 
\begin{equation}
M^{AA}_{\mu\nu} 
= - g_{\mu\nu} + \frac{p_\mu l_\nu + p_\nu l_\mu}{(pl)}\,.
\label{eq:two-pointers-AA} 
\end{equation}
The noteworthy and truly valuable fact here is that this is of
order~$0$ as $p^2 \to \infty$, while the two-point function of a
Proca field goes like~$p^2$. The formula is analogous to that which
comes out of lightcone gauge-fixing~\cite{Leibbrandt87}. However, the
meaning is quite different; in particular, our formalism is fully
covariant. On configuration space, therefore, $\vev{A(x,l)\,A(x',l)}$
essentially scales like $\la^{-2}$ under $x \mapsto \la x$, whereas
$\vev{A^\pt(x)\,A^\pt(x')}$ goes as~$\la^{-4}$.

Let us fill up a little table of vacuum expectation values of field
products, needed further down:
\begin{align}
M^{FF}_{\mu\nu,\rho\sg} 
&= -(p_\mu p_\rho \,g_{\nu\sg} - p_\nu p_\rho \,g_{\mu\sg}
- p_\mu p_\sg \,g_{\nu\rho} + p_\nu p_\sg \,g_{\mu\rho}).
\notag \\
M^{\del A,A}_{\mu\nu,\la} 
&= i\biggl( p_\mu g_{\nu\la} 
- p_\mu \frac{p_\nu l_\la + p_\la l_\nu}{(pl)}\,  \biggr),
\qquad 
M^{F\phi}_{\mu\nu} = \frac{p_\nu l_\mu - p_\mu l_\nu}{(pl)}\,,
\notag \\ 
M^{A\phi}_\mu &= - \frac{il_\mu}{(pl)}\,,  \qquad
M^{A,\del\phi}_{\mu\nu} = \frac{p_\nu l_\mu}{(pl)}\,, \qquad
M^{\del A,\phi}_{\mu\nu} = - \frac{p_\mu l_\nu}{(pl)}\,, 
\notag \\
M^{\phi\phi} &= \frac{1}{m^2}\,,  \qquad\quad
M^{\del\phi,\phi}_\mu = - \frac{ip_\mu}{m^2}\,,
\label{eq:two-pointers-more} 
\end{align}
as well as
\begin{align}
M^{Aw}_\mu &= \frac{i}{(pl)}\, M^{AA}_{\mu\sg} \,dl^\sg
= i \biggl( \frac{-g_{\mu\sg}}{(pl)}
+ \frac{p_\sg l_\mu}{(pl)^2} \biggr) \,dl^\sg \,,
\qquad M^{w\phi} = 0,
\notag \\
M^{ww} &= \frac{1}{(pl)^2}\, M^{AA}_{\sg\tau} \,dl^\sg \w dl^\tau
= - \frac{g_{\sg\tau}}{(pl)^2} \,dl^\sg \w dl^\tau,
\notag \\
M^{Fw}_{\mu\nu} &= d_l\,M^{F\phi}_{\mu\nu} 
= i \biggl( \frac{p_\nu g_{\mu\sg} - p_\mu g_{\nu\sg}}{(pl)}
+ \frac{p_\mu l_\nu - p_\nu l_\mu}{(pl)^2} p_\sg \biggr) \,dl^\sg \,,
\label{eq:two-pointers-w} 
\end{align}
\end{subequations}
using the relation $l_\sg \,dl^\sg = 0$. It is clear that massless
bosons do not bear escort quantum fields.%
\footnote{Spacelike strings have been more often employed in the
literature on SLF. It is nevertheless better here to deal with
lightlike strings, since then in general the intertwiners are
functions, not just distributions; so we need not smear them. Our
arguments work either way~\cite{Figueiredo17}.}

\medskip

The construction of SLF for spin $2$ or helicity $\pm 2$ proceeds in
the same way, from the equivalent object to the Faraday tensor~$F$,
the linearized Riemann tensor $R$ for spin or helicity~$2$, towards
the string-local replacement for the pointlike (symmetric rank~$2$
tensor) ``potential''.  Note that physical scalar fields are not
stringy.%
\footnote{Nor are free Dirac fields; SLF for half-integer spin greater
than $\half$ or integer spin greater than~$2$ are discussed
elsewhere~\cite{PlaschkeY12,MundDO17,MundRS17b}.}

\section{Perturbation theory for SLF: the role of string independence}
\label{sec:perturbing}

New theories demand care with the mathematics. We intend to borrow
from the St\"uckelberg--Bogoliubov--Epstein--Glaser (SBEG)
``renormalization without regularization'' formalism for perturbation
theory, both most rigorous and flexible
\cite{BogoliubovS80,EpsteinG73}. Since renormalization theory for SLF
is in its infancy, it still works partly as a heuristic guide. We
only outline what we need here from it.

The method involves the construction of a scattering operator
$\bS[g;l]$ functionally dependent on a (multiplet of) smooth external
fields $g(x)$, which mathematically are test functions. The procedure
is natural in view of locality; the functional scattering operator
acts on the Fock spaces corresponding to local free fields, of the
pointlike or stringlike variety, for a prescribed set of free
particles. It is submitted to the following conditions.

\begin{itemize}
\item
Covariance: 
$U(a,\La) \bS[g;l] U^\7(a,\La) = \bS[(a,\La)g; \La l]$, with
$(a,\La)g(x) = g(\La^{-1}(x - a))$.

\item
Unitarity: $\bS^{-1}[g;l] = \bS^\7[g;l]$.

\item
Causality. Let $V^+$, $V^-$ denote the future and past solid light
cones. Then
\begin{equation}
\bS[g_1 + g_2;l] = \bS[g_1;l]\, \bS[g_2;l]
\label{eq:madre-del-cordero} 
\end{equation}
when 
$(\supp g_2 + \bR^+l) \cap (\supp g_1 + \bR^+l + V^+) = \emptyset$,
or equivalently 
$(\supp g_1 + \bR^+l) \cap (\supp g_2 + V^- + \bR^+l) = \emptyset$.
\end{itemize}

In practice one looks for $\bS[g;l]$ as a power series in~$g$, of the
form
\begin{equation}
\bS[g;l] = 1 + \sum_{k=1}^\infty \frac{i^k}{k!} \int_{\bM_4^k}
S_k(x_1,\dots,x_k,l) g(x_1) \cdots g(x_k) \,dx_1 \cdots dx_k.
\label{eq:S-matrix} 
\end{equation}

Only the first-order term $S_1$ is postulated. This will be a Wick
polynomial in the free fields.%
\footnote{In many models it looks like an interaction Lagrangian. It
should, however, be kept in mind that the building blocks in the
procedure are quantum fields; ditto, our starting point is Wigner's
theory of quantum Poincar\'e modules~\cite{Wigner39} and corresponding
field-strength representations of the Lorentz group, rather than a
classical Lagrangian that one attempts to ``quantize''.}

We come back in a moment to the structure of $S_1$ in the present
context. In consonance with \eqref{eq:madre-del-cordero}, the
$S_k(x_1,\dots,x_k,l)$ for $k \geq 2$ are \textit{time-ordered
products}, which need to be constructed. By locality, the causal
factorization
\begin{equation}
S_2(x,x',l) = \T[S_1(x,l) S_1(x',l)]
:= S_1(x,l) S_1(x',l) \text{ \ or \ } S_1(x',l) S_1(x,l),
\label{eq:T2-factor} 
\end{equation}
according as $\{x + tl\}$ is later or earlier than $\{x' + tl\}$,
fixes $S_2$ on a large region of $\bM_4^2 \x S^2$. Indeed, assuming
$l^0 > 0$, a string $\{x+tl\}$ lies to the future of another string
$\{x'+t'l\}$ if and only if $((x - x')\,l) \geq 0$ and the
intersection of the strings is empty. That is, $x$ lies to the future
of, or~on, the hyperplane $x' + l^\perp$, but not on the full line
$x'+ \bR l$ \cite{Figueiredo17}. Consequently, the strings cannot be
ordered if and only if $x$ lies on the string $\{x'+t'l\}$ or vice
versa; i.e., if and only if $x - x'$ is lightlike and parallel to~$l$.
This exceptional set:
\begin{equation}
\sD := \set{(x,x',l) : (x - x')^2 = 0, \ ((x - x')\,l) = 0}
\label{eq:bad-to-the-bone} 
\end{equation}
is of measure zero in $\bM_4^2 \x S^2$. The extension of such products
to the whole of $\bM_4^2 \x S^2$, mainly by upholding string
independence, is the SBEG renormalization problem in a nutshell.

Existence of the adiabatic limit is the property that the $S_k$ be
integrable distributions, in the sense of Schwartz~\cite{Schwartz66}.
In that limit, as $g$ goes to a constant, the covariant $\bS[g;l]$
is expected to approach the invariant physical scattering
matrix~$\bS$, so that $U(a,\La) \bS\, U^\7(a,\La) = \bS$, all
dependence on the string disappearing.

\medskip

A lesson of gauge field theory is that couplings of quantum fields
should fall out from \textit{a simple underlying principle}. The
natural and essential hypothesis of interacting SLF theory is simple
enough: physical observables and quantities closely related to them,
particularly the $\bS$-matrix, cannot depend on the string
coordinates. This is the \textbf{string-independence} principle:
colloquially, the string ``ought not to be seen''. Let $S_1$ denote a
first-order vertex coupling in general. For the physics of the model
described by $S_1$ to be string-independent, one must require that a
vector field $Q_1^\mu(x,l)$ exist such that
\begin{equation}
d_l S_1 = (\del Q_1) \equiv \del_\mu Q_1^\mu \,,
\label{eq:yet-unseen} 
\end{equation}
so that, regarding the $\bS$-matrix as the adiabatic limit of
Bogoliubov's functional $\bS$-matrix, on applying integration by
parts, the contribution from the divergence vanishes. Moreover,
(perturbative) string independence should hold at every order in the
couplings, surviving renormalization.

Already the condition that $d_l S_1$ be a divergence severely
restricts the interaction vertices in~$S_1$; we proceed to throw light
on the fermion sector by using it in the next section. Further along,
all the time-ordered products $S_k$ in the functional $\bS$-matrix
ought to be determined from string independence.

\section{On the string-local boson sector}
\label{sec:bosonization}

It turns out that the string independence principle holds great power
both as a heuristic device and a justification tool, dictating
\textit{symmetry} (of the Abelian and non-Abelian kind) from
interaction%
\footnote{Thus reversing Yang's \textit{dictum}, restated in the
famous terminological discussion on gauge interactions between Dirac,
Ferrara, Kleinert, Martin, Wigner, Yang himself and Zichichi
\cite{Zichichi84}.}
down to almost every nut and bolt. A complete account of electroweak
theory would start by showing that, when the string independence
principle is applied to the physically relevant set of boson SLF, with
their known masses and charges, replacing the standard pointlike
fields, plus one \textit{physical} Higgs particle $\phi_4(x)$,%
\footnote{Following Okun~\cite{Okun91}, and for obvious grammatical
reasons, henceforth we refer to a (physical) Higgs boson as a higgs,
with a lowercase~h. Note also that, in the presence of a
massless~$A_4$, the notation~$\phi_4$ is not meant to purport the
higgs as a rogue escort!}
one recovers precisely the phenomenological couplings of
flavourdynamics in the Standard Model (SM), with massive bosons
mediating the weak interactions, and the $U(2)$ structure constants,
as, for instance, in \cite{Scheck12} or~\cite[Ch.~1]{Nagashima13}.
(One cannot quite say that we recover \textit{the} Standard Model
picture after spontaneous symmetry breaking has allegedly taken place,
since our boson fields are different, and our rule set cares little
for Lagrangians. But the coincidence of the couplings ought to be
evident -- see the discussion at the end of
Section~\ref{sec:at-laeva-malorum}.)

Such a derivation, spelled out in a future paper~\cite{WienNotes},
requires one to examine time-ordered products corresponding to graphs
involving boson particles up to third order in the couplings. For want
of space, here we can just display its flavour, and foremost the
results we need, to build up our derivation for chirality of weak
interactions.

\begin{itemize}

\item
Apart from the higgs particle sector, a string-local theory of
interacting bosons at first order in the coupling constant~$g$ must be
of the form:
\begin{align}
S^B_1(x,l) &= g \sum_{a,b,c} f_{abc} F_a(x) A_b(x,l) A_c(x,l)
\label{eq:bosons-mate} 
\\
& + g \sump_{a,b,c} f_{abc} (m^2_a - m_b^2 - m_c^2) 
\bigl( A_a(x,l) A_b(x,l) \phi_c(x,l)
- A_a(x,l) \del\phi_b(x,l) \phi_c(x,l) \bigr),
\notag 
\end{align}
where we omit the notation $\wick:\text{\,---\,}:$ for Wick products,
and the restricted sum $\sum'$ runs over massive fields only. Here the
$f_{abc}$ denote the (completely skewsymmetric) structure constants of
the (reductive) symmetry group of the model; the mass of the vector
boson $A_a$ is denoted $m_a$, and complete contraction of Lorentz
indices is understood. Notice that the escort fields hold a somewhat
analogous place to St\"uckelberg fields.

\item
Now it is straightforward to check that the $1$-form $d_l S^B_1$,
measuring the dependence on the string variable of the vertices
in~\eqref{eq:bosons-mate}, is a divergence: 
$d_l S^B_1(x,l) = \bigl(\del Q_1^B \bigr)(x,l)$, where $Q_1^B$ is
given by:
\begin{equation}
2g \sum_{a,b,c} f_{abc} (F_a A_c)_\mu w_b 
+ g \sump_{a,b,c} f_{abc} (m_a^2 + m_c^2 - m_b^2)
(A_{a\mu} - \del_\mu\phi_a) \phi_c w_b.
\label{eq:bosons-shadow} 
\end{equation}
We shall need $Q_1^B$ to prove chirality of the couplings to the
fermion sector. 

\item
At once we adapt our notation to the one used in the SM. This model
has three masses $m_1 = m_2 < m_3$ different from zero and one
$m_4 = 0$. Defining the Weinberg angle%
\footnote{This makes sense in the renormalized theory
\cite[Sect.~29.1]{Schwartz14}.}
by $m_1/m_3 =: \cos\Th$, we employ the basis in which
$$
f_{123} = \half \cos\Th, \quad  
f_{124} = \half \sin\Th, \quad  f_{134} = f_{234} = 0,
$$
all other $f_{abc}$ following from complete skewsymmetry. They are
seen to be the structure constants of (the Lie algebra of) the $U(2)$
determined by the \textit{physical} particle fields. We shall use the
standard notations
$$
W_\pm \equiv \frac{1}{\sqrt{2}}(W_1 \mp iW_2)
:= \frac{1}{\sqrt{2}}(A_1 \mp iA_2), \quad
Z := A_3,  \quad  A := A_4  
$$
and similarly for $\phi_\pm$, $w_\pm$, $\phi_Z$ and~$w_Z$; with masses
$m_W = m_1$, $m_Z = m_3$ and $m_\ga = m_4 = 0$.

\item
With this in hand, we focus on \eqref{eq:bosons-shadow}, keeping in
mind that, although an escort field does not exist for the photon, the
field $w_4$ exists at the same title as $w_1$, $w_2$ and $w_Z$. The
first summand in~\eqref{eq:bosons-shadow} yields:
\begin{align}
& 2g \tsum f_{abc}(\del_\mu A_{a\la} - \del_\la A_{a\mu}) A_c^\la w_b
\notag \\
&\quad = ig \sin\Th \bigl[ (\del_\mu A_\la - \del_\la A_\mu)
(w_- W_+^\la - w_+ W_-^\la)
+ (\del_\mu W_{-\la} - \del_\la W_{-\mu})(w_+ A^\la - w_4 W_+^\la)
\notag \\
&\hspace*{6em} 
+ (\del_\mu W_{+\la} - \del_\la W_{+\mu})(w_4 W_-^\la - w_- A^\la)
\bigr] 
\notag \\
&\quad + ig \cos\Th \bigl[(\del_\mu Z_\la - \del_\la Z_\mu)
(w_- W_+^\la - w_+ W_-^\la) 
+ (\del_\mu W_{-\la} - \del_\la W_{-\mu})(w_+ Z^\la - w_Z W_+^\la) 
\notag \\
&\hspace*{6em}
+ (\del_\mu W_{+\la} - \del_\la W_{+\mu})(w_Z W_-^\la - w_- Z^\la)
\bigr].
\label{eq:capeat-qui-potest} 
\end{align}

\item
Our $Q^B_{1\mu}$ above is not complete, since bosonic couplings
involving the higgs sector have not been included. They are also 
derived from the string-independence principle.%
\footnote{There again, SLF theory goes one better: the ``negative
squared mass'' in the higgs' self-potential, not accounted for in the
SM~\cite{Peskin16}, is derived from string independence. We refer to
the forthcoming \cite{MundS17} in this respect.}
Of those, for our purposes in this paper we need only:
\begin{equation}
\frac{g}{2\cos\Th} m_Z (\phi_4 (\del_\mu\phi_Z - Z_\mu)
- \del_\mu\phi_4\, \phi_Z) w_Z;
\label{eq:higgs-shadow-terms} 
\end{equation}
actually these play a pivotal role in our problem. Clearly, terms of
this type are suggested by the last group of summands in
\eqref{eq:bosons-shadow}.

\item
By the way, the expected $g^2 AAAA$ terms and thus the indications of
the classical geometrical gauge approach are recovered in our
formalism from string independence at the level of~$S_2$.

\end{itemize}

\section{The first-order constraints}
\label{sec:so-far-so-good}

Our framework for electroweak theory is outlined next. This both
exemplifies the principle and contributes to the core of this paper.

\begin{itemize}

\item
The couplings between interaction carriers and matter currents in a
theory with massive or massless vector bosons $A_{a\mu}$ must be of
the form
\begin{gather}
g\bigl (b^a A_{a\mu} J_V^\mu + \tilde b^a A_{a\mu} J_A^\mu 
 + c^a \phi_a S + \tilde c^a \phi_a S_5 \bigr) ; 
\label{eq:tapa-del-perol} 
\\
\shortintertext{where}
J_V^\mu = \ovl\psi \ga^\mu \psi, \quad
J_A^\mu = \ovl\psi \ga^\mu \ga^5 \psi, \quad 
S = \ovl\psi \psi, \quad 
S_5 = \ovl\psi \ga^5 \psi,
\notag
\end{gather}
with electric charge conserved in the interaction vertices. Our key
assumption point is that these $A_a^\mu$ and~$\phi_a$ above are now
given as string-local quantum fields, thus satisfying
renormalizability by power counting. There exist \emph{no other scalar
couplings which comply with renormalizability}. To wit, Lorentz
invariance requires that all cubic terms be of the above form, and
renormalizability forbids quartic terms.%
\footnote{Since the two Fermi fields required by Lorentz invariance
already have scaling dimension~$3$, any two further fields would
give~$5$, exceeding the power-counting limit.}

\item
The $\psi$ in \eqref{eq:tapa-del-perol} are ordinary fermion fields --
we should not assume chiral fermions \textit{ab~initio}, and we
do~not.

\item
The coefficients $b^a$, $\tilde b^a$, $c^a$, $\tilde c^a$
in~\eqref{eq:tapa-del-perol} are to be determined from string
independence.

\end{itemize}

The proof of chirality in the couplings of electroweak bosons to the
fermion sector of the SM from string independence develops in two
stages. In the first stage, we need not invoke the $Q_1$-vector of the
boson sector. For these couplings, we make the most general Ansatz, as
explained after~\eqref{eq:tapa-del-perol}, again omitting the notation
$\wick:\text{\,---\,}:$ for the Wick products:
\begin{align}
S_1^F(x,l) 
&:= g\bigl( b_1 W_{-\mu} \bar e \ga^\mu \nu 
+ \tilde b_1 W_{-\mu} \bar e \ga^\mu \ga^5 \nu
+ b_2 W_{+\mu} \bar\nu \ga^\mu e
+ \tilde b_2 W_{+\mu} \bar\nu \ga^\mu \ga^5 e
\notag \\
&\qquad + b_3 Z_\mu \bar e \ga^\mu e
+ \tilde b_3 Z_\mu \bar e \ga^\mu \ga^5 e
+ b_4 Z_\mu \bar\nu \ga^\mu \nu
+ \tilde b_4 Z_\mu \bar\nu \ga^\mu \ga^5 \nu
\notag \\
&\qquad + b_5 A_\mu \bar e \ga^\mu e 
+ \tilde b_5 A_\mu \bar e \ga^\mu \ga^5 e 
+ b_6 A_\mu \bar\nu \ga^\mu \nu 
+ \tilde b_6 A_\mu \bar\nu \ga^\mu \ga^5 \nu
\notag \\
&\qquad + c_1 \phi_- \bar e \nu + \tilde c_1 \phi_- \bar e \ga^5 \nu
+ c_2 \phi_+ \bar\nu e + \tilde c_2 \phi_+ \bar\nu \ga^5 e
\notag \\
&\qquad + c_3 \phi_Z \bar e e + \tilde c_3 \phi_Z \bar e \ga^5 e
+ c_4 \phi_Z \bar\nu \nu + \tilde c_4 \phi_Z \bar\nu \ga^5 \nu
\notag \\
&\qquad + c_0 \phi_4 \bar e e + \tilde c_0 \phi_4 \bar e \ga^5 e
+ c_5 \phi_4 \bar\nu \nu + \tilde c_5 \phi_4 \bar\nu \ga^5 \nu \bigr).
\label{eq:fermis-wish-list} 
\end{align}

All the boson fields here are string-local, except for the pointlike
higgs field $\phi_4$. Here $e$~stands for an electron, or muon, or
$\tau$-lepton pointlike field or for (a suitable combination of) quark
fields $d,s,b$; and $\nu$ for the neutrinos or for the quarks
$u,c,t$.%
\footnote{As already indicated, we consider just one generation of
leptons.}
Charge is conserved in each term. Unitarity of the $\bS$-matrix, in
the light of~\eqref{eq:S-matrix}, dictates that $S_1$ be Hermitian.
Thus, for instance, $b_2 = b_1^*$ and $\tilde b_2 = \tilde b_1^*$; and
we may choose phases so that both $b_1$ and~$\tilde b_1$ are real.
Moreover, $b_3, b_4, b_5, b_6$ and 
$\tilde b_3, \tilde b_4, \tilde b_5, \tilde b_6$ are all real; 
$c_2 = c_1^*$ and $\tilde c_2 = - \tilde c_1^{\,*}$; 
$c_3, c_4, c_0, c_5$ are real, whereas 
$\tilde c_3, \tilde c_4, \tilde c_0, \tilde c_5$ are imaginary. We may
assume that the photon should not couple to neutrinos, which are
uncharged, and drop the corresponding terms, with coefficients $b_6$,
$\tilde b_6$, right away.%
\footnote{Were we not to do so, electric charge would appear as the
\textit{difference} between the couplings of the photon to the
electron and the neutrino.}

As indicated, the $\psi$-fields $e$ and $\nu$ are ordinary pointlike
fermion fields. Let us use the Dirac equation to handle them; we could
employ Weyl equations as well. The important feature is that the SBEG
procedure is thoroughly an on-shell construction:
\begin{equation}
\overrightarrow{\delslash} \psi = -im_\psi \psi,  \qquad
\bar\psi \overleftarrow{\delslash} = im_\psi \bar\psi.
\label{eq:Dirac-eqns} 
\end{equation}

String independence at this order demands that there be a
$Q^F_\mu(x,l)$ such that
\begin{equation}
d_l S_1^F(x,l) = \del^\mu Q^F_\mu(x,l).
\label{eq:SI-one} 
\end{equation}

\begin{prop} 
The string independence requirement~\eqref{eq:SI-one} can be satisfied
if and only if
\begin{alignat}{2}
c_1  &= i(m_e - m_\nu) b_1,  \qquad & c_3 &= 0,
\nonumber \\
c_2  &= i(m_\nu - m_e) b_1,  \qquad & c_4 &= 0,
\nonumber \\
\tilde c_1 &= i(m_e + m_\nu) \tilde b_1, \qquad
& \tilde c_3 &= 2i m_e \tilde b_3, 
\nonumber \\
\tilde c_2 &= i(m_\nu + m_e) \tilde b_1, \qquad
& \tilde c_4 &= 2i m_\nu \tilde b_4, 
\word{and} \tilde b_5 = 0.
\label{eq:good-coeffs} 
\end{alignat}
The corresponding $Q_1^{F\mu}$ is unique and is of the form
\begin{align}
Q_1^{F\mu} &:= g\bigl( b_1 w_- \bar e \ga^\mu \nu
+ \tilde b_1 w_- \bar e \ga^\mu \ga^5 \nu
+ b_1 w_+ \bar\nu \ga^\mu e + \tilde b_1 w_+ \bar\nu \ga^\mu \ga^5 e
\notag \\
&\qquad + b_3 w_Z \bar e \ga^\mu e
+ \tilde b_3 w_Z \bar e \ga^\mu \ga^5 e 
+ b_4 w_Z\bar\nu \ga^\mu \nu + \tilde b_4 w_Z\bar\nu \ga^\mu \ga^5 \nu
+ b_5 w_4 \bar e \ga^\mu e \bigr).
\label{eq:fermis-shadow} 
\end{align}
\end{prop}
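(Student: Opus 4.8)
The plan is to compute the one-form $d_l S_1^F$ explicitly and then recognise it as a total divergence $\del^\mu Q^F_\mu$, reading off the constraints on the coefficients along the way. The only ingredients needed are the three identities $d_l A_b^\mu = \del^\mu w_b$, $d_l\phi_b = w_b$ and $d_l w_b = 0$ established in Section~\ref{sec:on-a-string}, together with the fact that the Dirac fields $e,\nu$ and the pointlike higgs $\phi_4$ carry no string dependence and are therefore annihilated by $d_l$. Applying $d_l$ term by term to~\eqref{eq:fermis-wish-list}, each vector-boson coupling $A_{a\mu}\bar\psi\ga^\mu\psi$ (or its axial version) turns into $\del_\mu w_a\,\bar\psi\ga^\mu\psi$, while each escort coupling $\phi_a\bar\psi\psi$ (or $\phi_a\bar\psi\ga^5\psi$) turns into $w_a\bar\psi\psi$ (resp.\ $w_a\bar\psi\ga^5\psi$); the higgs terms with coefficients $c_0,\tilde c_0,c_5,\tilde c_5$ drop out entirely, which is why they stay unconstrained at first order. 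Note $d_l S_1^F$ is form-valued in $l$, so any admissible $Q^F_\mu$ must be linear in the one-forms $w_a$.

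Next I would integrate by parts on every term carrying a derivative on $w$, writing $\del_\mu w_a\,\bar\psi\ga^\mu\psi = \del_\mu\!\bigl(w_a\,\bar\psi\ga^\mu\psi\bigr) - w_a\,\del_\mu\!\bigl(\bar\psi\ga^\mu\psi\bigr)$ and similarly for the axial currents. The first pieces assemble exactly into the divergence of the candidate $Q_1^{F\mu}$ of~\eqref{eq:fermis-shadow}, provided its coefficients match those of $S_1^F$ (here the hermiticity choices $b_2=b_1$, $\tilde b_2=\tilde b_1$ are used). The second pieces are evaluated with the Dirac equations~\eqref{eq:Dirac-eqns}: a short computation gives $\del_\mu(\bar\psi_1\ga^\mu\psi_2)=i(m_1-m_2)\bar\psi_1\psi_2$ for the vector current and, using $\ga^\mu\ga^5=-\ga^5\ga^\mu$, $\del_\mu(\bar\psi_1\ga^\mu\ga^5\psi_2)=i(m_1+m_2)\bar\psi_1\ga^5\psi_2$ for the axial one. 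These $w_a\cdot(\text{scalar})$ remainders must cancel against the escort-derived terms $c_i\,w_a\bar\psi\psi$ coming from the first step, and matching coefficients gives the relations~\eqref{eq:good-coeffs}: the $w_-$ block forces $c_1=i(m_e-m_\nu)b_1$ and $\tilde c_1=i(m_e+m_\nu)\tilde b_1$, the $w_+$ block the conjugate pair, and the diagonal $Z$ couplings give $c_3=c_4=0$ (equal masses cancel) together with $\tilde c_3=2im_e\tilde b_3$, $\tilde c_4=2im_\nu\tilde b_4$.

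The characteristic subtlety — the one genuinely \emph{forced} conclusion rather than mere bookkeeping — concerns the photon axial coefficient $\tilde b_5$. Since the photon carries a form field $w_4$ but no escort, there is no $w_4\,\bar e\ga^5 e$ coupling available in $S_1^F$ to absorb a remainder. Matching the first piece $\tilde b_5\,\del_\mu w_4\,\bar e\ga^\mu\ga^5e$ would require a $w_4\bar e\ga^\mu\ga^5e$ entry in $Q_1^{F\mu}$, but its second piece $2im_e\tilde b_5\,w_4\bar e\ga^5e$ has nothing to cancel against; since $m_e\neq0$, this forces $\tilde b_5=0$. I expect this to be the main point of the argument, since everything else reduces to the routine identities above.

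For uniqueness I would argue that the only building blocks of the correct charge, Lorentz type and scaling dimension available for a form-valued $Q^F_\mu$ are the bilinears $w_a\,\bar\psi\ga^\mu\psi$ and $w_a\,\bar\psi\ga^\mu\ga^5\psi$ displayed in~\eqref{eq:fermis-shadow}; matching the $\del_\mu w_a$ pieces of $d_l S_1^F$ then fixes every coefficient with no residual freedom, because the distinct terms $\del_\mu w_a\,\bar\psi\ga^\mu\psi$ are linearly independent, so no nonzero combination of these bilinears is divergence-free. The converse (sufficiency of the listed constraints) is then immediate: with the coefficients as in~\eqref{eq:good-coeffs} and $Q_1^{F\mu}$ as in~\eqref{eq:fermis-shadow}, the two steps run in reverse to give $d_l S_1^F=\del^\mu Q^F_\mu$ identically, establishing~\eqref{eq:SI-one}.
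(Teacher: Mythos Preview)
Your proposal is correct and follows essentially the same route as the paper's proof: apply $d_l$ using $d_l A_a^\mu=\del^\mu w_a$ and $d_l\phi_a=w_a$, integrate by parts, evaluate the resulting current divergences via the Dirac equations, and then match the residual $w_a$-terms by linear independence, singling out the $\tilde b_5$ case as the one forced constraint because the massless photon admits no escort term to absorb the leftover $w_4\,\bar e\ga^5 e$. Your uniqueness argument (restricting the admissible form-valued $Q^F_\mu$ to the bilinears $w_a\,\bar\psi\ga^\mu(\ga^5)\psi$ by charge, Lorentz type and scaling degree) is slightly more explicit than the paper's bare appeal to linear independence of the cubic operators, but the content is the same.
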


Note that there are no restrictions at this stage on the set
$\{c_0, \tilde c_0, c_5, \tilde c_5\}$, since the corresponding
vertices are pointlike.

\begin{proof}
The string differential $d_l S_1^F$ with the Ansatz
\eqref{eq:fermis-wish-list} for $S_1^F$ is expressed with the help of
the form-valued fields defined in \eqref{eq:good-form}:
\begin{align*}
d_l S_1^F(x,l) 
&= g\bigl( b_1 \del_\mu w_- \bar e \ga^\mu \nu 
+ \tilde b_1 \del_\mu w_- \bar e \ga^\mu \ga^5 \nu
+ b_1 \del_\mu w_+ \bar\nu \ga^\mu e
+ \tilde b_1 \del_\mu w_+ \bar\nu \ga^\mu \ga^5 e
\\
&\qquad + b_3 \del_\mu w_Z \bar e \ga^\mu e
+ \tilde b_3 \del_\mu w_Z \bar e \ga^\mu \ga^5 e
+ b_4 \del_\mu w_Z \bar\nu \ga^\mu \nu
+ \tilde b_4 \del_\mu w_Z \bar\nu \ga^\mu \ga^5 \nu
\\
&\qquad + b_5 \del_\mu w_4 \bar e \ga^\mu e
+ \tilde b_5 \del_\mu w_4 \bar e \ga^\mu \ga^5 e
\\
&\qquad + c_1 w_- \bar e \nu + \tilde c_1 w_- \bar e \ga^5 \nu
+ c_2 w_+ \bar\nu e + \tilde c_2 w_+ \bar\nu \ga^5 e
\\
&\qquad + c_3 w_Z \bar e e + \tilde c_3 w_Z \bar e \ga^5 e
+ c_4 w_Z \bar\nu \nu + \tilde c_4 w_Z \bar\nu \ga^5 \nu \bigr).
\end{align*}
Using the Dirac equations~\eqref{eq:Dirac-eqns} and 
$\ga^5 \ga^\mu = -\ga^\mu \ga^5$ and \textit{defining} $Q_1^F$ as in
Eq.~\eqref{eq:fermis-shadow}, one finds:
\begin{align*}
d_l S_1^F(x,l) &= \del_\mu \bigl[ 
Q_1^{F\mu} + \tilde b^5 w_4 \bar e \ga^\mu \ga^5 e \bigr]
\\
&\quad + g\bigl[ (c_1 - i(m_e - m_\nu)b_1) w_- \bar e \nu 
+ (\tilde c_1 - i(m_e + m_\nu)\tilde b_1)  w_- \bar e \ga^5 \nu
\\
&\qquad + (c_2 - i(m_\nu - m_e)b_1) w_+ \bar\nu e
+ (\tilde c_2 - i(m_\nu + m_e) \tilde b_1) w_+ \bar\nu \ga^5 e
\\
&\qquad + (\tilde c_3 - 2im_e\tilde b_3) w_Z \bar e \ga^5 e 
+ (\tilde c_4 - 2im_\nu \tilde b_4) w_Z \bar\nu \ga^5 \nu
- 2im_e \tilde b_5 w_4 \bar e \ga^5 e
\\
&\qquad + c_3 w_Z \bar e e + c_4 w_Z \bar\nu \nu \bigl].
\end{align*}
The last four lines cannot be expressed as divergences, and by linear
independence of the cubic operators, the corresponding terms must
vanish separately. This implies the claims.
\end{proof}

Notice also that the argument for $\tilde b_5 = 0$ would have failed
if the electron were massless. Whereas the axial terms for massive
vector bosons in the original Ansatz have survived. They will keep
surviving, as we shall see.

It is pertinent to substitute expressions \eqref{eq:good-coeffs}
into~\eqref{eq:fermis-wish-list}, which we do now for convenience
later~on:
\begin{align}
S_1^F(x,l) &= g\bigl( b_1 W_{-\mu} \bar e \ga^\mu \nu 
+ \tilde b_1 W_{-\mu} \bar e \ga^\mu \ga^5 \nu
+ b_1 W_{+\mu} \bar\nu \ga^\mu e
+ \tilde b_1 W_{+\mu} \bar\nu \ga^\mu \ga^5 e
\notag \\
&\qquad + b_3 Z_\mu \bar e \ga^\mu e
+ \tilde b_3 Z_\mu \bar e \ga^\mu \ga^5 e
+ b_4 Z_\mu \bar\nu \ga^\mu \nu
+ \tilde b_4 Z_\mu \bar\nu \ga^\mu \ga^5 \nu
+ b_5 A_\mu \bar e \ga^\mu e
\notag \\
&\qquad + i(m_e - m_\nu) b_1 \phi_- \bar e \nu 
+ i(m_e + m_\nu) \tilde b_1 \phi_- \bar e \ga^5 \nu 
- i(m_e - m_\nu) b_1 \phi_+ \bar\nu e 
\notag \\
&\qquad + i(m_e + m_\nu)\tilde b_1 \phi_+ \bar\nu \ga^5 e
+ 2im_e \tilde b_3 \phi_Z \bar e \ga^5 e 
+ 2im_\nu \tilde b_4 \phi_Z \bar\nu \ga^5 \nu
\notag \\
&\qquad + c_0 \phi_4 \bar e e + \tilde c_0 \phi_4 \bar e \ga^5 e 
+ c_5 \phi_4 \bar\nu \nu + \tilde c_5 \phi_4 \bar\nu \ga^5 \nu \bigr).
\label{eq:fermis-mate} 
\end{align}

\section{Time-ordered products for tree graphs}
\label{sec:well-tempered}

Recall that the causal factorization~\eqref{eq:T2-factor} fixes the
time-ordered product $\T[S_1(x,l) S_1(x',l)]$ only outside the
set~$\sD$. The possible extensions across $\sD$ are restricted by the
requirement that the Wick expansion, valid outside $\sD$, hold
everywhere: we require that the time-ordered product of Wick
polynomials $U = U(x,l)$, $V' = V(x',l)$ satisfy
\begin{equation}  
\T [U V'] = \wick:U V': + \underbrace{\biggl[ \sum_{\vf,\chi'}
\wick:\frac{\del U}{\del\vf} \vev{\T \vf\,\chi'}\,
\frac{\del V'}{\del\chi'}: \biggr]}_{\displaystyle \T[UV']_\tree}
+\cdots+ \vev{\T [U V']}, 
\label{eq:Wick-expand} 
\end{equation}
where the sum in the brackets goes over all free fields, and we have
employed formal derivation within the Wick polynomial. The terms in
brackets are called the \textit{tree graphs}. Thereby the extension
problem is reduced to the extension of numerical distributions.

In particular, at the tree-graph level, it only remains to extend the
time-ordered two-point functions $\vev{\T \vf\,\chi'}$ of free fields.
One such extension is given by
\begin{equation}
\vev{\T_0 \vf(x,l)\,\chi(x',l)} := \frac{i}{(2\pi)^4} \int d^4p\,
\frac{e^{-i(p(x-x'))}}{p^2 - m^2 + i0}\, M^{\vf\chi}(p,l). 
\label{eq:pandemonium} 
\end{equation}
It has the nice feature that it preserves all off-shell relations
between the fields.%
\footnote{The string derivative $d_l$ fulfils the Leibniz rule 
with~$\T_0$ unconditionally. As long as no on-shell relations are
involved, $\del_\mu$ can be exchanged with $\T_0$ as well, e.g.:
$$
\del_\mu \vev{\T_0 A_\nu \chi'} - \del_\nu \vev{\T_0 A_\mu \chi'}
= \vev{\T_0 F_{\mu\nu} \chi'}.
$$}

If the scaling degree of the two-point function $\vev{\vf\,\chi'}$
with respect to $\sD$ and to the diagonal $\{x = x'\}$ is lower than
the respective codimensions $3$ and~$4$, then the time-ordered
two-point function is unique,
$\vev{\T \vf\,\chi'} = \vev{\T_0 \vf\,\chi'}$. Otherwise, it admits
the addition of a distribution with support on~$\sD$.

A look at the tables~\eqref{eq:two-pointers} shows that this happens
only in the cases $\vev{\T \del_\la A_\mu\,A_\ka'}$ and
$\vev{\T \del_\la A_\mu\,\del'_\ka w'}$. These have scaling degree~$3$
with respect to both $\sD$ and the diagonal $\{x = x'\}$, and
therefore admit a renormalization by adding a numerical distribution
supported on~$\sD$ and with the same scaling degree. Any such
distribution is of the form
\begin{equation}
\dl_l(x' - x) := \int_0^\infty ds\, \dl(x' - x - sl), 
\label{eq:delta-l} 
\end{equation}
multiplied by some well-behaved function $f(x' - x, l)$. Thus, in
these cases the most general two-point functions are
\begin{subequations}
\label{eq:bitter-tea} 
\begin{align}
\vev{\T \del_\la A_\mu\,A_\ka'} 
&= \vev{\T_0 \del_\la A_\mu\,A_\ka'} + c_{\la\mu\ka} \,\dl_l \,,
\label{eq:bitter-tea-light} 
\\ 
\vev{\T \del_\la A_\mu\,\del_\ka' w'} 
&= \vev{\T_0 \del_\la A_\mu\,\del_\ka' w'} + b_{\la\mu\ka} \,,
\label{eq:bitter-tea-dark} 
\end{align}
\end{subequations}
where $c_{\la\mu\ka}$ and $b_{\la\mu\ka}$ are some well-behaved
function and one-form respectively, as yet undetermined.

We now seek to enforce string independence of time-ordered products at
second order in the coupling constant. String independence at first
order~\eqref{eq:yet-unseen} plus the factorization~\eqref{eq:T2-factor}
imply that the relation
\begin{equation}
d_l \T[S_1(x,l) S_1(x',l)] 
= \del_\mu \T[Q_1^\mu(x,l) S_1(x',l)] 
+ \del'_\mu \T[S_1(x,l) Q_1^\mu(x',l)]
\label{eq:the-grail} 
\end{equation}
holds for all $(x - x', l)$ outside $\sD$. The
\textbf{string independence principle} forces us to require that this
relation be valid everywhere.
It turns out that this requirement fixes all coefficients in
\eqref{eq:fermis-wish-list}.

As advertised, to this end we shall only need to examine tree graphs
in~$S_2$. We reckon that the tree graph contribution to the
obstruction~\eqref{eq:the-grail} is given by
\begin{align} 
& \sum_{\vf,\chi'} \biggl[ d_l\frac{\del S_1}{\del \vf} 
\vev{\T \vf\chi'} \frac{\del S'_1}{\del\chi'}
+ \frac{\del S_1}{\del \vf}\, d_l\vev{\T \vf\chi'} 
\frac{\del S'_1}{\del\chi'}
+ \frac{\del S_1}{\del \vf} \vev{\T \vf\chi'} 
\,d_l \frac{\del S'_1}{\del\chi'} \biggr] 
\notag \\
&\qquad - \sum_{\psi,\chi'} \biggl(
\del_\mu \frac{\del Q^\mu}{\del\psi} \vev{\T \psi\chi'} 
+ \frac{\del Q^\mu}{\del\psi} \del_\mu \vev{\T \psi\,\chi'} \biggr)
\frac{\del S'_1}{\del\chi'} \; - [x \otto x'],
\label{eq:the-monster} 
\end{align}
where we have written $Q$ for~$Q_1$. This expression expands to
\begin{subequations}
\label{eq:the-ogre} 
\begin{align} 
&\sum_{\vf,\chi'} \biggl[ d_l\frac{\del S_1}{\del \vf}\,
\vev{\T \vf\chi'} \frac{\del S'_1}{\del\chi'}
+ \frac{\del S_1}{\del \vf} \vev{\T d_l \vf\chi'} 
\frac{\del S'_1}{\del\chi'}
+ \frac{\del S_1}{\del \vf} \vev{\T \vf d_l  \chi'} 
\frac{\del S'_1}{\del\chi'}
+ \frac{\del S_1}{\del \vf} \vev{\T \vf\chi'} 
\,d_l \frac{\del S'_1}{\del\chi'} \biggr]
\notag \\
&\qquad - \sum_{\psi,\chi'} \biggl(
\del_\mu \frac{\del Q^\mu}{\del\psi} \vev{\T \psi\chi'} 
+ \frac{\del Q^\mu}{\del\psi} \vev{\T \del_\mu\psi\,\chi'} \biggr)
\frac{\del S'_1}{\del\chi'} \; - [x \otto x']
\notag \\
&\qquad + \sum_{\vf,\chi'} \frac{\del S_1}{\del \vf} 
\bigl( d_l \vev{\T \vf \chi'} - \vev{\T d_l\vf \chi'}
- \vev{\T \vf d_l \chi'} \bigr) \frac{\del S'_1}{\del\chi'}
\label{eq:the-ogre-nasty} 
\\ 
&\qquad - \sum_{\psi,\chi'} \frac{\del Q^\mu}{\del\psi} \bigl(
\del_\mu \vev{\T \psi \chi'} - \vev{\T \del_\mu\psi \chi'} \bigr)
\frac{\del S'_1}{\del\chi'} \; - [x \otto x'].
\label{eq:the-ogre-horrid} 
\end{align}
\end{subequations}
The first, second, fifth and sixth terms reduce to a tree graph
contribution:
\begin{align}
& \sum_{\chi'} \biggl[ \sum_\vf \Bigl( 
d_l\frac{\del S_1}{\del \vf}\, \vev{\T \vf\chi'} 
+ \frac{\del S_1}{\del \vf} \vev{\T d_l \vf\chi'} \Bigr)
- \sum_\psi \Bigl(
\del_\mu \frac{\del Q^\mu}{\del\psi} \vev{\T \psi\chi'} 
+ \frac{\del Q^\mu}{\del\psi} \vev{\T \del_\mu\psi\,\chi'} \Bigr)
\biggr]
\frac{\del S'_1}{\del\chi'}
\notag \\
&\qquad = \T[(d_l S_1) S'_1]_\tree 
- \T[(\del_\mu Q^\mu) S'_1]_\tree \,,
\label{eq:the-secret-garden} 
\end{align}
which vanishes by construction; we refer to
Appendix~\ref{app:aliquid-obstat} for the proof of that equality. The 
other four terms in the first two summations vanish similarly. 

Thus, the whole expression~\eqref{eq:the-monster} reduces to the sum
\eqref{eq:the-ogre} of the last two lines above, which we may call the
``obstruction to string independence''.

We now seek to determine this quantity. Its vanishing, even admitting
the most general time-ordering prescription~$\T$, will provide the
correct couplings, and in the occasion chirality of the interaction of
the fermions with the massive intermediate vector bosons.

\medskip

We distinguish three types of $2$-point obstructions. For terms
$\vf, \chi$ in $S_1$ and $\psi, C^\mu$ in $Q_1^\mu$, we label them as 
follows: 
\begin{subequations}
\label{eq:obstat} 
\begin{align}
\wh\sO(\vf,\chi') 
&:= d_l \vev{\T \vf \chi'} - \vev{\T d_l\vf \chi'} 
- \vev{\T \vf\, d_l\chi'},
\label{eq:obstat-hat} 
\\  
\sO_\mu(\psi,\chi')
&:= \vev{\T \del_\mu \psi \chi'} - \del_\mu \vev{\T \psi \chi'},
\label{eq:obstat-mu} 
\\ 
\sO(C,\chi')
&:= \vev{\T \del_\mu C^\mu \chi'} - \del_\mu \vev{\T C^\mu \chi'}.
\label{eq:obstat-plain} 
\end{align}
\end{subequations}

Since the $\T_0$ ordering preserves all off-shell relations between
the fields, the first two types only occur for $\T \neq \T_0$. More
specifically, the only obstructions of these types that we meet are
\begin{subequations}
\label{eq:scandali} 
\begin{align} 
\wh\sO(F_{\mu\nu}, A'_\ka) &= d_l(c_{[\mu\nu]\ka}\,\dl_l), 
\label{eq:scandalum-hat} 
\\
\sO_\mu(w, F'_{\al\bt}) &= b_{[\al\bt]\mu} \,,
\label{eq:scandalum-mu} 
\end{align}
\end{subequations}
with skewsymmetrization 
$c_{[\mu\nu]\ka} \equiv c_{\mu\nu\ka} - c_{\nu\mu\ka}$ and similarly
for $b_{[\al\bt]\mu}$. These are numerical $1$-forms in the
$l$~variable. On the other hand, all obstructions of 
type~\eqref{eq:obstat-plain} are $0$-forms, since the only candidate
field $C^\mu$ for a $1$-form is $\del^\mu w$ -- but this does not
appear in~$Q_1^\mu$, see \eqref{eq:capeat-qui-potest}
and~\eqref{eq:higgs-shadow-terms}. We conclude that the terms in
\eqref{eq:the-ogre-horrid} which involve two-point obstructions of the
third type must cancel separately, i.e., cannot be cancelled by terms
involving the first two types of two-point obstructions.

\medskip

We now examine $2$-point obstructions of the third
type~\eqref{eq:obstat-plain}. First of all, there are two that vanish:
\begin{subequations}
\label{eq:nihil-obstat} 
\begin{align}
\sO(A,\phi') &:=
\vev{\T_0 \del_\mu A^\mu \phi'} - \del_\mu \vev{\T_0 A^\mu \phi'} = 0,
\label{eq:nihil-obstat-A-Phi} 
\\
\sO(\del_\la A,\phi') &:=
\vev{\T_0 \del_\mu \del_\la A^\mu \phi'} 
- \del_\mu \vev{\T_0 \del_\la A^\mu \phi'} = 0.
\label{eq:nihil-obstat-dA-Phi} 
\end{align}
\end{subequations}
Indeed, the left-hand side of~\eqref{eq:nihil-obstat-A-Phi} is 
$- m^2 \vev{\T_0 \phi \phi'} - \del_\mu \vev{\T_0 A^\mu \phi'}$,
which vanishes because
$$
\del_\mu \vev{\T_0 A^\mu\phi'} = \frac{-i}{(2\pi)^4} \int d^4p\,
\frac{e^{-i(p(x-x'))}}{p^2 - m^2 + i0} \equiv -i\,D_F(x - x')
= - m^2 \vev{\T_0 \phi \phi'},
$$
in view of~\eqref{eq:two-pointers-more}. Thus
\eqref{eq:nihil-obstat-A-Phi} holds; and a similar calculation
yields~\eqref{eq:nihil-obstat-dA-Phi}. Note that, by definition,
$$
D_F(x) := \frac{1}{(2\pi)^4} \int d^4p\, 
\frac{e^{-i(px)}}{p^2 - m^2 + i0} \,,  \word{so that}
(\square + m^2) D_F(x) = - \dl(x).
$$

Next, we consider
$$
\sO(A, A'_\ka) :=
\vev{\T_0 \del_\mu A^\mu A'_\ka} - \del_\mu \vev{\T_0 A^\mu A'_\ka}.
$$
Using \eqref{eq:two-pointers}, we get
$$
\sO(A, A'_\ka) = \frac{1}{(2\pi)^4} \int d^4p\,
\frac{e^{-i(p(x - x'))}}{p^2 - m^2 + i0}\,
\frac{(m^2 - p^2)l_\ka}{(pl)} 
= - \frac{l_\ka}{(2\pi)^4} \int d^4p\, \frac{e^{-i(p(x - x'))}}{(pl)}\,.
$$
On bringing in the distributions
$1/(pl) = -i \int_0^\infty ds\, e^{is(pl)}$ and $\dl_l$
of~\eqref{eq:delta-l}, we may rewrite the obstruction as
\begin{equation}
\sO(A, A'_\ka) = \frac{il_\ka}{(2\pi)^4} \int_0^\infty ds\, 
\int d^4p\, e^{-i(p(x - x' - sl))} = il_\ka \,\dl_l(x - x').
\label{eq:obs-A-A} 
\end{equation}

We next determine
\begin{align*}
\sO(\del\phi, A'_\ka)
&:= \vev{\T_0 \del_\mu \del^\mu\phi A'_\ka}
- \del_\mu \vev{\T_0 \del^\mu\phi A'_\ka}
\\
&= - (\square + m^2) \vev{\T_0 \phi A'_\ka} 
= - \frac{1}{(2\pi)^4} \int d^4p\, e^{-i(p(x-x'))}\,
\frac{l_\ka}{(pl)} = i l_\ka \,\dl_l \,.
\end{align*}
Since $\sO$ is bilinear in its arguments, this yields a useful result:
$\sO(A - \del\phi, A'_\ka) = 0$. Likewise,
$$
\sO(\del A_\la, \phi') := \vev{\T_0 \del_\mu \del^\mu A_\la \phi'}
- \del_\mu \vev{\T_0 \del^\mu A_\la \phi'} 
= - (\square + m^2) \vev{\T_0 A_\la \phi'}
= - i l_\la \,\dl_l \,. 
$$

\medskip

We now tackle the obstruction $\sO(\del_\la A, A'_\ka)$, which
involves $\vev{\T \del_\la A_\mu A'_\ka}$ that is not unique but
admits the renormalization~\eqref{eq:bitter-tea-light}. To wit,
\begin{align}
\sO(\del_\la A, A'_\ka)
&:= \vev{\T\, \del_\mu \del_\la A^\mu A'_\ka} 
- \del_\mu \vev{\T\, \del_\la A^\mu A'_\ka}
\notag \\
&= \del_\la \bigl( - m^2 \vev{\T_0 \phi A'_\ka} 
- \del^\mu \vev{\T_0 A_\mu A'_\ka} \bigr) 
- \del^\mu \bigl( c_{\la\mu\ka} \,\dl_l \bigr)
\notag \\
&= i l_\ka\,\del_\la \dl_l 
- \del^\mu \bigl( c_{\la\mu\ka}\,\dl_l \bigr).
\label{eq:obs-dA-A} 
\end{align}
Next, we find, using \eqref{eq:two-pointers-AA}
and~\eqref{eq:obs-A-A}, that
\begin{align} 
\sO(\del A_\la, A'_\ka)
&:= \vev{\T\, \del_\mu \del^\mu A_\la  A'_\ka}
-  \del_\mu \vev{\T \, \del^\mu A_\la A'_\ka} 
\notag \\
&= -(\square + m^2) \vev{\T_0 A_\la A'_\ka}
- \del^\mu(c_{\mu\la\ka} \,\dl_l)
\notag \\
&= -i g_{\la\ka} \,\dl + i(l_\la \del_\ka + l_\ka \del_\la)\,\dl_l 
- \del^\mu(c_{\mu\la\ka} \,\dl_l).
\label{eq:obs-dA-A'} 
\end{align}
On subtracting \eqref{eq:obs-dA-A} from~\eqref{eq:obs-dA-A'}, we
arrive at
$$
\sO(F_{\8\la}, A'_\ka) \equiv
\sO(\del_\la A - \del A_\la, A'_\ka) 
= -i g_{\la\ka} \,\dl + i l_\la \del_\ka \,\dl_l 
+ \del^\mu(c_{[\la\mu]\ka} \,\dl_l).
$$

Finally, we take note of
\begin{align*} 
\sO(\del\phi_a, \phi'_a) 
&:= \vev{\T_0 \del_\mu \del^\mu \phi_a \phi'_a}
-  \del_\mu \vev{\T_0 \del^\mu \phi_a \phi'_a} 
= \frac{i}{m_a^2}\,\dl  \word{for}  a = 1,2,3;
\\  
\sO(\del\phi_4, \phi'_4) 
&:= \vev{\T_0 \del_\mu \del^\mu \phi_4 \phi'_4}
-  \del_\mu \vev{\T_0 \del^\mu \phi_4 \phi'_4} = i\,\dl.
\end{align*}

\medskip

To sum up: the obstructions of the third bosonic type are: 
\begin{align}
\sO(A, \phi') &= 0,
&\sO(A, A'_\ka) &= il_\ka \,\dl_l \,, 
\notag \\
\sO(\del_\la A, \phi') &= 0,
& \sO(\del\phi, A'_\ka) &= i l_\ka \,\dl_l \,,
\notag \\
\sO(A - \del\phi, A'_\ka) &= 0,  
& \sO(\del A_\la, \phi') &= - i l_\la \,\dl_l \,,
\label{eq:tabula-rasa} 
\\
\sO(\del\phi_a, \phi'_a) &= (i/m_a^2)\,\dl \,,
& \sO(\del_\la A, A'_\ka) &= i l_\ka\,\del_\la \dl_l 
- \del^\mu \bigl( c_{\la\mu\ka}\,\dl_l \bigr),
\notag \\
\sO(\del\phi_4, \phi'_4) &= i\,\dl \,,
& \sO(\del A_\la, A'_\ka)
&= -i g_{\la\ka} \,\dl + i(l_\la \del_\ka + l_\ka \del_\la)\,\dl_l 
- \del^\mu \bigl( c_{\mu\la\ka}\,\dl_l \bigr),
\notag \\
&& \sO(F_{\8\la}, A'_\ka) 
&= -i g_{\la\ka} \,\dl + i l_\la \del_\ka \,\dl_l 
+ \del^\mu(c_{[\la\mu]\ka} \,\dl_l).
\notag 
\end{align}

The fermionic obstructions, which do not involve stringlike fields,
are much simpler. They are of two kinds, where $\psi$, $\psi'$ denote
two fermions of the same type:
\begin{align}
\sO(\ga \psi, \bar \psi')
&:= \vev{\T_0 \ga^\mu \del_\mu \psi \,\bar \psi'}
-  \ga^\mu \del_\mu \vev{\T_0 \psi \,\bar \psi'} = -\dl,
\notag \\
\sO(\psi', \bar \psi \ga)
&:= \vev{\T_0 \psi' \,\del_\mu \bar \psi} \ga^\mu 
-  \del_\mu \vev{\T_0 \psi' \,\bar \psi} \ga^\mu  = +\dl.
\label{eq:delta-glitch} 
\end{align}
Indeed, using~\eqref{eq:Dirac-eqns}, we obtain
$$
\sO(\ga \psi, \bar \psi') 
= - (\delslash + im_\psi) \vev{\T_0 \psi\,\bar \psi'}
= -i(\delslash + im_\psi) S^F(x - x') = - \dl(x - x'),
$$
and the second case follows similarly.

\section{Computing the second-order constraints}
\label{sec:at-laeva-malorum}

\textit{A priori}, in equation \eqref{eq:the-grail} there may be three
kinds of contractions pertinent to our problem of the type
\eqref{eq:the-ogre-horrid}, coming from the \textit{crossing} of the
respective bosonic and fermionic couplings $S_1^B$ and $S_1^F$ with
the $Q_1^B$ and $Q_1^F$ vector operators. These crossings contain
information about the fermionic vertices. Happily, the bosonic
interaction set $S_1^B$ and the fermionic $Q_1^F$-vertex turn out an
inert combination, because there are no obstructions involving the
form-valued fields~$w_a$.

Our goal in this section is to determine the couplings, as far as
possible, from the vanishing of obstructions
in~\eqref{eq:the-ogre-horrid} of the third
type~\eqref{eq:obstat-plain} -- which have to vanish separately from
the other two types as remarked after Eq.~\eqref{eq:scandali}.
Firstly, we seek the $\tilde b_3$ and~$\tilde b_4$ coefficients of the
$Z$-boson, which are determined together with the higgs couplings
$c_0$ and~$c_5$. Secondly, we shall be able to determine the quotient
$b_1/\tilde b_1$, thereby obtaining chirality of the charged boson
interactions in the SM; the value of $b_1$ is trivially determined
afterwards. Thirdly, we shall look for the electromagnetic coupling
$b_5$. At the end, we find the missing terms for the neutral current
and show vanishing of the other higgs couplings.

In what follows, we consider two types of crossings. The first
involves a $Q_1^B$ vector $\psi$, namely a summand taken from the
formulas \eqref{eq:capeat-qui-potest}
and~\eqref{eq:higgs-shadow-terms}, and a $S_1^F$ coupling $\chi'$ that
is a summand of~\eqref{eq:fermis-mate}; these we call
$(Q_1^B,S_1^F)$-type crossings. The second type pairs a $Q_1^F$ vector
summand $\psi$ of~\eqref{eq:fermis-shadow} with a term $\chi'$
in~\eqref{eq:fermis-mate}; these will be $(Q_1^F,S_1^F)$-type
crossings. (The possible fermionic crossings are listed in
Appendix~\ref{app:cross-purposes}.) Each such crossing yields a single
term in the total obstruction \eqref{eq:the-ogre-horrid}, consisting of a
$2$-point obstruction combined with certain (Wick) products of fields.
Different individual crossings may, and will, turn out to have the
same field content -- which give opportunities for cancellation of
their obstructions.

For convenience and readability, we shall omit the factor $g^2$ in
all crossings in this section, reinstating it in the final result.

\subsection{Step 1: impact of higgs couplings}

\begin{lema} 
\label{lm:first-higgs}
The crossings with field content $w_Z(x,l) \phi_Z(x,l) \bar e(x) e(x)$
yield no obstruction to string independence, if and only if the higgs
and $Z$-boson coupling coefficients $c_0$ and $\tilde b_3$ are related
as follows:
\begin{equation}
c_0 = \frac{8 \tilde b_3^2\, m_e \cos^2\Th}{m_W}\,.
\label{eq:first-higgs} 
\end{equation}
\end{lema}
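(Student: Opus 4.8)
The plan is to enumerate \emph{all} crossings whose uncontracted (external) field content reduces to $w_Z\phi_Z\bar e e$ with a \emph{scalar} Fermi bilinear, attach to each the single two-point obstruction that arises from its one contracted line, add the results with the mandatory $-[x\otto x']$ antisymmetrization, and demand that the sum vanish. Since $c_0$ enters linearly (only through $S_1^F$) while $\tilde b_3$ enters both in $Q_1^F$ and, via the already-derived relation $\tilde c_3=2im_e\tilde b_3$, in $S_1^F$, the surviving equation will be linear in $c_0$ and quadratic in $\tilde b_3$ -- exactly the shape of~\eqref{eq:first-higgs}.

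First I would isolate the two contributing crossings. The $(Q_1^F,S_1^F)$ crossing pairs the axial term $\tilde b_3\,w_Z\,\bar e\ga^\mu\ga^5 e$ of~\eqref{eq:fermis-shadow} with the pseudoscalar term $2im_e\tilde b_3\,\phi_Z\,\bar e\ga^5 e$ of~\eqref{eq:fermis-mate}, contracting one electron line so that $w_Z,\bar e$ survive from the first factor and $\phi_Z,e$ from the second. The $(Q_1^B,S_1^F)$ crossing pairs the higgs-shadow vertex~\eqref{eq:higgs-shadow-terms}, specifically its $\del_\mu\phi_4\,\phi_Z\,w_Z$ piece carrying the prefactor $\tfrac{g}{2\cos\Th}m_Z$, against the scalar higgs coupling $c_0\,\phi_4\,\bar e e$, contracting the two $\phi_4$ legs (here $C^\mu=\del^\mu\phi_4$) so that $w_Z,\phi_Z$ survive from the shadow and $\bar e,e$ from $S_1^F$. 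I would then confirm that no other pairing lands in this channel: contracting the $\phi_4\del_\mu\phi_Z w_Z$ or $\phi_4 Z_\mu w_Z$ pieces of the shadow produces $\del\phi_Z$ or $\bar e\ga^\mu e$ rather than $\phi_Z\,\bar e e$, and a crossing built on the vector current $b_3\,w_Z\,\bar e\ga^\mu e$ yields $\bar e\ga^5 e$; all of these are distinct field contents handled elsewhere.

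Next I would read off the two obstructions from the tables. The bosonic crossing uses $\sO(\del\phi_4,\phi_4')=i\,\dl$ from~\eqref{eq:tabula-rasa}; after substituting $m_Z=m_W/\cos\Th$ its prefactor becomes $\tfrac{g}{2\cos\Th}m_Z=\tfrac{g\,m_W}{2\cos^2\Th}$, and it is precisely the inversion of this factor when solving for $c_0$ that produces the $\cos^2\Th/m_W$ in~\eqref{eq:first-higgs}. The fermionic crossing uses~\eqref{eq:delta-glitch}, $\sO(\ga\psi,\bar\psi')=-\dl$, but with a $\ga^5$ threaded through the contracted current; here I would commute $\ga^\mu\ga^5=-\ga^5\ga^\mu$ and apply the on-shell Dirac equations~\eqref{eq:Dirac-eqns} to the contracted line, which collapses $\ga^5\ga^5=\mathbb 1$ and brings down a factor $m_e$, leaving a genuine scalar $\bar e e$ times $\dl$ -- confirming that both crossings really share the content $w_Z\phi_Z\bar e e$.

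The main obstacle will be disciplined bookkeeping rather than any single hard estimate: getting every sign right (the relative sign of the two fermionic contractions $e\otto\bar e$, the signs in~\eqref{eq:delta-glitch} and~\eqref{eq:tabula-rasa}, and the overall $-[x\otto x']$) and tracking the numerical factors (the explicit $2im_e$ in $\tilde c_3$, the $\tfrac12$ in the shadow prefactor, and the factor from differentiating the bilinear on shell) so that they assemble to the stated $8$. Once the two $\dl$-supported terms are written with matching content, their cancellation forces a single scalar relation; solving it for $c_0$ gives~\eqref{eq:first-higgs}, and the ``if and only if'' is immediate since $c_0$ is the only coefficient left undetermined in this channel.
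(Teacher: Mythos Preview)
Your plan is essentially the paper's own argument: the same two crossings, the same obstruction tables, and the same cancellation condition. Two bookkeeping points are worth flagging, since you identified bookkeeping as the crux.

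First, the sign of the $\del_\mu\phi_4\,\phi_Z\,w_Z$ piece in~\eqref{eq:higgs-shadow-terms} is \emph{negative}: the shadow term reads $\frac{g}{2\cos\Th}m_Z\bigl(\phi_4(\del_\mu\phi_Z-Z_\mu)-\del_\mu\phi_4\,\phi_Z\bigr)w_Z$, so the relevant prefactor is $-\frac{g}{2\cos\Th}m_Z$, not $+\frac{g}{2\cos\Th}m_Z$ as you wrote. This sign is what makes the bosonic contribution come out as $-\,i c_0\,\frac{m_Z}{\cos\Th}\,w_Z\phi_Z\bar e e\,\dl$ and hence cancel against the positive fermionic piece.

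Second, and more substantively: the fermionic two-point obstruction does \emph{not} bring down a factor~$m_e$. The Dirac equation enters the derivation of~\eqref{eq:delta-glitch}, but the result is simply $\pm\dl$ with no mass factor; in the table~\eqref{eq:bearing-the-cross} the relevant entries are $\bar\sg\ga^\mu\ga^5\2{f}\,\2{\bar f'}\ga^5\tau'\leadsto+\bar\sg\tau\cdot\dl$ and its partner, each giving $+\bar e e\,\dl$ with unit coefficient. The only $m_e$ in this channel is the one already sitting in $\tilde c_3=2im_e\tilde b_3$. If you were to insert an additional $m_e$ from ``applying the Dirac equation to the contracted line'', you would end up with $c_0\propto m_e^2\tilde b_3^2$ rather than~\eqref{eq:first-higgs}. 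The numerical $8$ arises as $(\tilde b_3)\times(2im_e\tilde b_3)\times 2\times 2$: product of the two coupling coefficients, two equal $\bar e$--$e$ contractions, and the doubling from the $x\leftrightarrow x'$ term.
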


\begin{proof}
One such crossing, of the $(Q_1^B,S_1^F)$-type, arises from the 
term $-\frac{1}{2\cos\Th}\, m_Z \,\del_\mu\phi_4\, \phi_Z w_Z$
in~\eqref{eq:higgs-shadow-terms} with the term $c_0 \phi_4 \bar e e$
in~\eqref{eq:fermis-mate}. From the table \eqref{eq:tabula-rasa},
this contributes to the total obstruction the term:
$$
- i c_0 \frac{m_Z}{\cos\Th}\, w_Z(x,l) \phi_Z(x,l) \bar e(x) e(x)
 \,\dl(x - x').
$$
A factor of~$2$ comes from appending the identical second contribution
in \eqref{eq:the-grail}; we do likewise from now on without further
notice.

On the other hand, there is a crossing of type $(Q_1^F,S_1^F)$,
matching $\tilde b_3 w_Z \bar e \ga^\mu \ga^5 e$
in~\eqref{eq:fermis-shadow} and
$2i m_e \tilde b_3 \phi_Z \bar e \ga^5 e$ in~\eqref{eq:fermis-mate}.
Here there are two 
$\bar e$-$e$ contractions of equal value, see the
table~\eqref{eq:bearing-the-cross}, for a total contribution of
$$
8i m_e \tilde b_3^2\, w_Z(x,l) \phi_Z(x,l) \bar e(x) e(x)
\,\dl(x - x').
$$
String independence therefore demands cancellation of the last two
expressions; since there are no more crossings with this field
content, this yields~\eqref{eq:first-higgs}.
\end{proof}

\begin{lema} 
\label{lm:second-higgs}
The crossings with field content 
$w_Z(x,l) \phi_4(x) \bar e(x) \ga^5 e(x)$
yield no obstruction to string independence, if and only if 
$c_0 = m_e/2 m_W$. Hence
\begin{equation}
\tilde b_3 = \pm \frac{1}{4\cos\Th} =: \eps_1 \frac{1}{4\cos\Th}
\label{eq:neutral-zone} 
\end{equation}
where the sign $\eps_1 = \pm 1$ is yet to be determined.
\end{lema}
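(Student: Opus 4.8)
The plan is to repeat the mechanism of Lemma~\ref{lm:first-higgs}, now isolating the crossings whose external field content is the \emph{pseudoscalar} combination $w_Z\phi_4\bar e\ga^5 e$ instead of the scalar $w_Z\phi_Z\bar e e$. Each contributing crossing produces one term of the total obstruction~\eqref{eq:the-ogre-horrid}, read off from the tables~\eqref{eq:tabula-rasa} and~\eqref{eq:delta-glitch}, and string independence forces their sum to vanish; solving that single linear relation for $c_0$ is the whole content of the first assertion.

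I expect two principal crossings. The $(Q_1^B,S_1^F)$-type one pairs the escort term $\tfrac{1}{2\cos\Th}m_Z\,\phi_4\,\del_\mu\phi_Z\,w_Z$ of~\eqref{eq:higgs-shadow-terms} with the induced pseudoscalar vertex $2im_e\tilde b_3\,\phi_Z\bar e\ga^5 e$ of~\eqref{eq:fermis-mate}, contracting the two $\phi_Z$ legs through $\sO(\del\phi_Z,\phi_Z')=(i/m_Z^2)\,\dl$; after inserting $m_Z=m_W/\cos\Th$ this leaves a term proportional to $m_e\tilde b_3/m_W$. The $(Q_1^F,S_1^F)$-type one pairs the axial shadow vertex $\tilde b_3\,w_Z\bar e\ga^\mu\ga^5 e$ of~\eqref{eq:fermis-shadow} with the scalar higgs vertex $c_0\,\phi_4\bar e e$; here the fermionic obstruction $\sO(\ga\psi,\bar\psi')=-\dl$ acts, and since $\ga^\mu\ga^5=-\ga^5\ga^\mu$ the Dirac identity it delivers is converted into a net $\ga^5$, so the external bilinear is genuinely $\bar e\ga^5 e$, with the usual factor~$2$ from the two equal $\bar e$--$e$ contractions. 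This piece is proportional to $c_0\tilde b_3$.

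Imposing cancellation, the common factor $\tilde b_3$ drops out, and the competing combinatorial weights — the single factor~$2$ from the $[x\otto x']$ symmetrisation for the bosonic crossing against the factor~$4$ from that symmetrisation together with the doubled fermionic contraction — conspire to give exactly $c_0=m_e/2m_W$. The ``Hence'' is then pure algebra: equating this with $c_0=8\tilde b_3^2 m_e\cos^2\Th/m_W$ from Lemma~\ref{lm:first-higgs} yields $8\tilde b_3^2\cos^2\Th=\tfrac12$, whence $\tilde b_3=\pm1/(4\cos\Th)=:\eps_1/(4\cos\Th)$, with the sign still undetermined.

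The anticipated difficulty is bookkeeping rather than conceptual. First I must confirm that the list of crossings with this exact content is complete; in particular the pairing of the vector vertex $b_3\,w_Z\bar e\ga^\mu e$ with a pseudoscalar higgs vertex $\tilde c_0\,\phi_4\bar e\ga^5 e$ also carries $\ga^5$ content, and I must check that it does not disturb the clean relation — either because $\tilde c_0$ is already forced to vanish, or because it is absorbed into a separate condition — while verifying that the axial$\times$pseudoscalar and vector$\times$scalar pairings instead sort into the scalar content of Lemma~\ref{lm:first-higgs}. Second, the $\ga^5$ signs, the on-shell Dirac reductions~\eqref{eq:Dirac-eqns}, and the two independent factors of~$2$ must be tracked closely enough that $\tilde b_3$ factors out cleanly and the numerical coefficient lands precisely on $1/(2m_W)$ rather than on some multiple of it.
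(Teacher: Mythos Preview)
Your approach is essentially the paper's: the same $(Q_1^B,S_1^F)$ crossing via $\sO(\del\phi_Z,\phi'_Z)=(i/m_Z^2)\dl$ and the same $(Q_1^F,S_1^F)$ crossing of $\tilde b_3 w_Z\bar e\ga^\mu\ga^5 e$ with $c_0\phi_4\bar e e$, leading to $c_0=m_e/2m_W$ and then~\eqref{eq:neutral-zone} by comparison with Lemma~\ref{lm:first-higgs}.

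One point you flag as uncertain is resolved differently from either of your guesses. The crossing of $b_3 w_Z\bar e\ga^\mu e$ with $\tilde c_0\phi_4\bar e\ga^5 e$ is \emph{not} disposed of because $\tilde c_0$ has already been shown to vanish (it has not; that happens only in Step~4), nor is it shunted into a separate condition. It vanishes outright: by the last two rows of the fermionic crossing table~\eqref{eq:bearing-the-cross}, the two $\bar e$--$e$ contractions of a vector current against a pseudoscalar density carry opposite signs and cancel, so the $b_3\tilde c_0$ term contributes zero to this field content regardless of the value of~$\tilde c_0$. With that clarified, your bookkeeping goes through exactly as in the paper.
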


\begin{proof}
There is one crossing of type $(Q_1^B,S_1^F)$, of 
$\frac{1}{2\cos\Th} m_Z w_Z \phi_4 \,\del_\mu\phi_Z$ from
\eqref{eq:higgs-shadow-terms} with the term
$2i m_e \tilde b_3 \phi_Z \bar e \ga^5 e$
from~\eqref{eq:fermis-shadow}. For this one, \eqref{eq:tabula-rasa}
yields
$$
-2 \tilde b_3 \frac{m_e}{m_W}\,
w_Z(x,l) \phi_4(x) \bar e(x) \ga^5 e(x) \,\dl(x - x').
$$
Now there are two relevant $(Q_1^F,S_1^F)$-type crossings:
$\tilde b_3 w_Z \bar e \ga^\mu \ga^5 e$ with $c_0 \phi_4 \bar e e$ and 
$b_3 w_Z \bar e \ga^\mu e$ with $\tilde c_0 \phi_4 \bar e \ga^5 e$.
The second vanishes -- see~\eqref{eq:bearing-the-cross} again -- and
the first yields
$$
4 \tilde b_3 c_0\, w_Z(x,l) \phi_4(x) \bar e(x) \ga^5 e(x)
\,\dl(x - x').
$$

Cancellation of these crossings requires $c_0 = m_e/2 m_W$, as 
claimed. Comparing that with the relation~\eqref{eq:first-higgs},
we arrive at $\tilde b_3^2 = 1/(16 \cos^2\Th)$, and
\eqref{eq:neutral-zone} follows.
\end{proof}

\begin{lema} 
\label{lm:third-higgs}
The vanishing of obstructions implies similar relations between the
higgs and $Z$-boson coupling coefficients $c_5$ and $\tilde b_4$:
$$
c_5 = \frac{8 \tilde b_4^2\, m_\nu \cos^2\Th}{m_W} 
= \frac{m_\nu}{2 m_W}
$$
and thereby leads to a determination of $\tilde b_4$ with another 
unspecified sign~$\eps_2$:
\begin{equation}
\tilde b_4 = \pm \frac{1}{4\cos\Th} =: \eps_2 \frac{1}{4\cos\Th}\,.
\label{eq:twilight-zone} 
\end{equation}
\end{lema}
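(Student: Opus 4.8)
The plan is to run the argument of Lemmas~\ref{lm:first-higgs} and~\ref{lm:second-higgs} essentially verbatim, but now in the neutral--neutrino channel obtained from the electron channel by the substitutions $e\to\nu$, $\tilde b_3\to\tilde b_4$, $c_0\to c_5$ and $m_e\to m_\nu$. The two relations to be produced are the neutrino counterparts of~\eqref{eq:first-higgs} and of the identity $c_0=m_e/2m_W$ established in Lemma~\ref{lm:second-higgs}; their consistency then fixes $\tilde b_4^2$ and yields~\eqref{eq:twilight-zone}. As before, each displayed obstruction already carries the factor of~$2$ coming from the second summand in~\eqref{eq:the-grail}.

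First I would collect the crossings with field content $w_Z(x,l)\,\phi_Z(x,l)\,\bar\nu(x)\,\nu(x)$. The $(Q_1^B,S_1^F)$-type crossing pairs the term $-\frac{1}{2\cos\Th}m_Z\,\del_\mu\phi_4\,\phi_Z w_Z$ of~\eqref{eq:higgs-shadow-terms} with $c_5\phi_4\bar\nu\nu$ of~\eqref{eq:fermis-mate}; contracting the two higgs fields through $\sO(\del\phi_4,\phi_4')=i\,\dl$ of~\eqref{eq:tabula-rasa} produces $-ic_5\,(m_Z/\cos\Th)\,w_Z\phi_Z\bar\nu\nu\,\dl(x-x')$. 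The $(Q_1^F,S_1^F)$-type crossing pairs $\tilde b_4 w_Z\bar\nu\ga^\mu\ga^5\nu$ of~\eqref{eq:fermis-shadow} with $2im_\nu\tilde b_4\phi_Z\bar\nu\ga^5\nu$ of~\eqref{eq:fermis-mate}; the two equal $\bar\nu$-$\nu$ contractions of~\eqref{eq:bearing-the-cross} give $8im_\nu\tilde b_4^2\,w_Z\phi_Z\bar\nu\nu\,\dl(x-x')$. Since no other crossing carries this content, string independence forces these to cancel, and with $m_Z=m_W/\cos\Th$ this is exactly $c_5=8\tilde b_4^2 m_\nu\cos^2\Th/m_W$.

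Next I would treat the field content $w_Z(x,l)\,\phi_4(x)\,\bar\nu(x)\,\ga^5\nu(x)$. The $(Q_1^B,S_1^F)$-type crossing now pairs $\frac{1}{2\cos\Th}m_Z\,w_Z\phi_4\,\del_\mu\phi_Z$ of~\eqref{eq:higgs-shadow-terms} with $2im_\nu\tilde b_4\phi_Z\bar\nu\ga^5\nu$ of~\eqref{eq:fermis-mate}, contracting $\del_\mu\phi_Z$ with $\phi_Z$ through $\sO(\del\phi_Z,\phi_Z')=(i/m_Z^2)\,\dl$, which (again using $m_Z=m_W/\cos\Th$) collapses to $-2\tilde b_4\,(m_\nu/m_W)\,w_Z\phi_4\bar\nu\ga^5\nu\,\dl(x-x')$. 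On the $(Q_1^F,S_1^F)$ side there are two candidates: $\tilde b_4 w_Z\bar\nu\ga^\mu\ga^5\nu$ with $c_5\phi_4\bar\nu\nu$, giving $4\tilde b_4 c_5\,w_Z\phi_4\bar\nu\ga^5\nu\,\dl(x-x')$, and $b_4 w_Z\bar\nu\ga^\mu\nu$ with $\tilde c_5\phi_4\bar\nu\ga^5\nu$, which vanishes by the $\ga$-matrix structure recorded in~\eqref{eq:bearing-the-cross}, exactly as its electronic analogue did. Cancellation then gives $c_5=m_\nu/2m_W$. Comparing the two relations and cancelling the common factor $m_\nu/m_W$ yields $\tilde b_4^2=1/(16\cos^2\Th)$, that is, \eqref{eq:twilight-zone} with an as-yet undetermined sign~$\eps_2$.

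I expect no genuinely new obstacle: the computation is the word-for-word neutrino transcription of the electron case already carried out. The only points demanding a second look are that the higgs contraction uses $\sO(\del\phi_4,\phi_4')=i\,\dl$ whereas the escort contraction uses $\sO(\del\phi_Z,\phi_Z')=(i/m_Z^2)\,\dl$---the two distinct entries of~\eqref{eq:tabula-rasa} whose mass factors must combine correctly---and that the vector coupling $b_4$ drops out through the same $\ga^5$ mechanism as $b_3$ did, so that $\tilde b_4$ alone is constrained. As in the electron channel, the determination of $\tilde b_4$ genuinely requires $m_\nu\neq0$ (and the nontrivial branch $\tilde b_4\neq0$): for a strictly massless neutrino both relations merely collapse to $c_5=0$ and leave $\tilde b_4$ free at this stage.
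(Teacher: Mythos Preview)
Your proof is correct and follows essentially the same route as the paper: the same two field contents $w_Z\phi_Z\bar\nu\nu$ and $w_Z\phi_4\bar\nu\ga^5\nu$, the same crossings, and the same cancellation conditions. One cosmetic discrepancy: for the second field content the paper enumerates four candidate crossings rather than your three, adding the pairing of $-\tfrac{1}{2\cos\Th}\,m_Z\,w_Z\phi_4 Z_\mu$ from~\eqref{eq:higgs-shadow-terms} with $\tilde b_4 Z_\ka\bar\nu\ga^\ka\ga^5\nu$, and then notes that this one (like the $b_4$--$\tilde c_5$ crossing) vanishes. Since you are transcribing the electron argument of Lemma~\ref{lm:second-higgs}, where the analogous $Z$--$Z$ pairing was not listed either, your omission is harmless and the outcome is identical.
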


\begin{proof}
In much the same way as before, we look now for crossings of either 
type with field content $w_Z(x,l) \phi_Z(x,l) \bar\nu(x) \nu(x)$.
There are just two of these: 
$-\frac{1}{2\cos\Th} m_Z w_Z \,\del_\mu\phi_4\, \phi_Z$ with
$c_5 \phi_4 \bar\nu \nu$ and 
$\tilde b_4 w_Z \bar\nu \ga^\mu \ga^5 \nu$ with
$2i m_\nu \tilde b_4 \phi_Z \bar\nu \ga^5 \nu$. These cancel provided
that $c_5$ and $\tilde b_4$ satisfy the first relation above.

On the other hand, the field content
$w_Z(x,l) \phi_4(x) \bar\nu(x) \ga^5 \nu(x)$ can arise from four 
crossings: $\frac{1}{2\cos\Th} m_Z w_Z \phi_4(\del_\mu\phi_Z - Z_\mu)$
with both $2i m_\nu \tilde b_4 \phi_Z \bar\nu \ga^5 \nu$ and
$\tilde b_4 Z_\ka \bar\nu \ga^\ka \ga^5 \nu$; and moreover the
$(Q_1^F,S_1^F)$-type ones $\tilde b_4 w_Z \bar\nu \ga^\mu \ga^5 \nu$ 
with $c_5 \phi_4 \bar\nu \nu$, and $b_4 w_Z \bar\nu \ga^\mu \nu$ with
$\tilde c_5 \phi_4 \bar\nu \ga^5 \nu$. The second and fourth of these
again vanish. Cancellation of the first and third leads to
$c_5 = m_\nu/2 m_W$; and $\tilde b_4^2 = 1/(16 \cos^2\Th)$ follows at 
once.
\end{proof}

Note that the higgs couplings $c_0$ and $c_5$ come out respectively 
proportional to the electron and neutrino masses, with the same 
proportionality constant -- as it should be.%
\footnote{We have left aside the possibility that $\tilde b_3$, $c_0$,
$\tilde b_4$ and~$c_5$ all vanish; this will soon be refuted.}

\subsection{Step 2: the road to chirality}

The signs $\eps_1$ and $\eps_2$ turn out to be related. This is the 
main step in the proof.

\begin{lema} 
\label{lm:auspicious-signs}
The coefficients $\tilde b_3$ and $\tilde b_4$ have opposite signs:
$\eps_2 = - \eps_1$.
\end{lema}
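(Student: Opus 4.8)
The plan is to pin down the relative sign by isolating one Wick monomial whose string-independence obstruction couples the two axial neutral-current constants $\tilde b_3$ and $\tilde b_4$. By \eqref{eq:neutral-zone} and \eqref{eq:twilight-zone} these are already fixed up to $\eps_1,\eps_2$, so it suffices to show they must cancel against each other. Since $\tilde b_3$ sits on the electron and $\tilde b_4$ on the neutrino, the only object that can bridge them is the charged current; accordingly I would track the field content $w_Z(x,l)\,W_-(x)\,\bar e(x)\ga^\la\ga^5\nu(x)$, i.e.\ the \emph{axial} charged-lepton current dressed by a $w_Z$ and an undifferentiated $W_-$. Two $(Q_1^F,S_1^F)$-type crossings feed it: the ``electron'' one pairs $\tilde b_3\,w_Z\bar e\ga^\mu\ga^5 e$ from \eqref{eq:fermis-shadow} with $b_1\,W_{-\la}\bar e\ga^\la\nu$ from \eqref{eq:fermis-mate}, contracting the two electron fields, and the ``neutrino'' one pairs $\tilde b_4\,w_Z\bar\nu\ga^\mu\ga^5\nu$ with the same charged-current term, contracting the two neutrino fields.

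Before computing, I would verify that nothing else contributes to this exact monomial, so that its vanishing is a clean statement about $\tilde b_3$ and $\tilde b_4$ alone. The bosonic $(Q_1^B,S_1^F)$ crossings built from the $WWZ$ vertex in \eqref{eq:capeat-qui-potest} always carry a $W$ \emph{field strength} $\del_\mu W_{-\la}-\del_\la W_{-\mu}$ rather than a bare $W_-$, hence they feed a different (differentiated) monomial and decouple here. Likewise the vector couplings $b_3,b_4$ and the axial charged current $\tilde b_1$ would collapse, via $\ga^\mu\ga^5\ga^5=\ga^\mu$ together with \eqref{eq:delta-glitch}, to the \emph{vector} current $\bar e\ga^\la\nu$, a different monomial. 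Thus only $b_1$ times the axial $Z$-couplings survives in the axial-current channel.

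With the field content isolated, each crossing collapses by the fermionic two-point obstructions \eqref{eq:delta-glitch}: the internal line becomes a $\dl(x-x')$, the free divergence index is absorbed through $\ga^\mu\ga^5=-\ga^5\ga^\mu$, and what remains is $\dl(x-x')\,w_Z W_{-\la}\bar e\ga^\la\ga^5\nu$ with coefficient $\propto b_1\tilde b_3$ for the electron crossing and $\propto b_1\tilde b_4$ for the neutrino crossing. The decisive point is the relative Fermi sign: in the electron crossing the contracted line is $\vev{\T e(x)\,\bar e(x')}$, read off with the two fields already adjacent, whereas in the neutrino crossing the surviving open line forces the contraction $\vev{\T\bar\nu(x)\,\nu(x')}=-\vev{\T\nu(x')\,\bar\nu(x)}$, an extra interchange of Grassmann fields. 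Hence the neutrino contribution enters with the opposite sign to the electron one.

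Summing the two in \eqref{eq:the-ogre-horrid} (with the $-[x\otto x']$ partner reproducing the same relation) and demanding that the total obstruction vanish then gives $\tilde b_3+\tilde b_4=0$, that is $\eps_2=-\eps_1$. I expect the only delicate step to be exactly this fermion-statistics bookkeeping — keeping straight the orientation of the open lepton line and the sign of the interchanged contraction — since it is what turns a naive ``$\tilde b_3=\tilde b_4$'' into the correct opposite-sign relation that underlies chirality.
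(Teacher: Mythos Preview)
Your isolation of the monomial $w_Z W_{-\la}\,\bar e\,\ga^\la\ga^5\nu$ is not clean, and the argument breaks at two places.

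First, the bosonic crossing does \emph{not} decouple. In the $w_Z$ sector of \eqref{eq:capeat-qui-potest} one has the term $i\cos\Th\,(\del_\mu W_{+\la}-\del_\la W_{+\mu})\,w_Z\,W_-^\la$. Here the field strength belongs to $W_+$, which is the \emph{contracted} field: it pairs with $W'_{-\ka}$ from the $S_1^F$ term $\tilde b_1 W_{-\ka}\bar e\ga^\ka\ga^5\nu$ through $\sO(F_{\8\la},A'_\ka)=-ig_{\la\ka}\dl+\cdots$ in~\eqref{eq:tabula-rasa}. The spectators left over are precisely $w_Z W_-^\la$ and $\bar e\ga^\ka\ga^5\nu$, and the $-ig_{\la\ka}\dl$ piece delivers the very monomial you are tracking, with coefficient proportional to $\cos\Th\,\tilde b_1$. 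The derivative lives on the contracted line, not on the surviving $W_-$.

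Second, your dismissal of the $(b_3,\tilde b_1)$ and $(b_4,\tilde b_1)$ crossings is wrong. Pairing a vector $Q_1^F$ coupling with an axial $S_1^F$ coupling leaves exactly \emph{one} $\ga^5$ after the fermionic obstruction absorbs~$\ga^\mu$; see the third and fourth lines of the table~\eqref{eq:bearing-the-cross}. There is no ``$\ga^\mu\ga^5\ga^5=\ga^\mu$'' to invoke, and these crossings feed the axial monomial with coefficients $b_3\tilde b_1$ and $b_4\tilde b_1$. Consequently the vanishing of your monomial gives a relation of the schematic form
\[
b_1(\tilde b_3-\tilde b_4)+\tilde b_1(b_3-b_4)-\cos\Th\,\tilde b_1=0,
\]
which mixes in the still-undetermined $b_3,b_4,\tilde b_1$ (fixed only later, in Step~4 and Corollary~\ref{cr:auspicious-signs}) and does not yield $\tilde b_3+\tilde b_4=0$.

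The paper avoids this by tracking a different monomial, $w_\mp W_{\pm\ka}\,\bar e\,\ga^\ka\ga^5 e$ (and its neutrino companion), where $w_\mp$ comes from the \emph{charged} $Q_1^F$ vertex. There the fermionic crossings produce only the combination $b_1\tilde b_1$, and the bosonic crossing contributes $\tilde b_3\cos\Th$ (respectively $-\tilde b_4\cos\Th$). One obtains the pair of relations \eqref{eq:almost-there}, and eliminating $b_1\tilde b_1$ gives $\tilde b_3=-\tilde b_4$, hence $\eps_2=-\eps_1$.
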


\begin{proof}
Consider together obstructions with field contents
$w_- W_{+\ka} \bar e \ga^\ka \ga^5 e$ and
$w_+ W_{-\ka} \bar e \ga^\ka \ga^5 e$. They may come from crossings
of type $(Q_1^F,S_1^F)$:
\begin{align*}
b_1 w_- \bar e \ga^\mu \nu 
& \word{with} \tilde b_1 W_{+\ka} \bar\nu \ga^\ka \ga^5 e 
& \text{and}\quad \tilde b_1 w_- \bar e \ga^\mu\ga^5 \nu
& \word{with} b_1 W_{+\ka} \bar\nu \ga^\ka e,
\\
b_1 w_+ \bar\nu \ga^\mu e 
& \word{with} \tilde b_1 W_{-\ka} \bar e \ga^\ka \ga^5 \nu 
& \text{and}\quad \tilde b_1 w_+ \bar\nu \ga^\mu\ga^5 e
& \word{with} b_1 W_{-\ka} \bar e \ga^\ka \nu.
\end{align*}
Each line gives rise to two identical obstructions, with total value
$$
-4 b_1 \tilde b_1 (w_- W_{+\ka} - w_+ W_{-\ka})
\bar e \ga^\ka \ga^5 e \,\dl(x - x').
$$

Such a term also arises from the $(Q_1^B,S_1^F)$-type crossing of the
term $i \cos\Th\, (w_- W_+^\la - w_+ W_-^\la) \, F^Z_{\mu\la}$
in~\eqref{eq:capeat-qui-potest} with
$\tilde b_3 Z_\ka \bar e \ga^\ka \ga^5 e$. As we saw in
Section~\ref{sec:well-tempered}, this is a ``dangerous'' crossing,
yielding
\begin{align*}
& 2 \tilde b_3 \cos\Th\, (w_- W_{+\ka} - w_+ W_{-\ka})
\bar e \ga^\ka \ga^5 e \,\dl(x - x')
\\
&+ 2i \tilde b_3 \cos\Th\, (w_- W_+^\la - w_+ W_-^\la)
\bar e \ga^\ka \ga^5 e
\,\del^\mu\bigl( c_{[\la\mu]\ka}\,\dl_l(x - x') \bigr).
\end{align*}
The term $i l_\la \,\del_\ka \dl_l$ in $\sO(F^Z_{\8\la}, Z'_\ka)$
does not contribute, since $l_\la W_\pm^\la = 0$ by transversality
(see Section~\ref{sec:on-a-string}). We obtain, in all:
\begin{align*}
& (2 \tilde b_3 \cos\Th - 4 b_1 \tilde b_1) 
(w_- W_{+\ka} - w_+ W_{-\ka}) \bar e \ga^\ka \ga^5 e \,\dl(x - x')
\\
&- 2i \tilde b_3 \cos\Th\, 
(w_- W_+^\la - w_+ W_-^\la) \bar e \ga^\ka \ga^5 e 
\,\del^\mu\bigl( c_{[\la\mu]\ka}\,\dl_l(x - x') \bigr).
\end{align*}
Here string independence dictates that $c_{[\la\mu]\ka} = 0$.%
\footnote{This implies that all two-point obstructions of the first
type~\eqref{eq:obstat-hat} also vanish, see~\eqref{eq:scandalum-hat}.
Those of the second type~\eqref{eq:obstat-mu} can be freely set to
zero, since they involve the up-to-now free parameters
$b_{\al\bt\mu}$, see~\eqref{eq:scandalum-mu}.}
The end result is
\begin{subequations}
\label{eq:almost-there} 
\begin{equation}
2 b_1 \tilde b_1 = \tilde b_3 \cos\Th.
\label{eq:almost-there-e} 
\end{equation}
A completely parallel computation, for obstructions with the field 
contents
$w_\mp W_{\pm\ka} \bar\nu \ga^\ka \ga^5 \nu$, gives the relation
\begin{equation}
2 b_1 \tilde b_1 = - \tilde b_4 \cos\Th.
\label{eq:almost-there-nu} 
\end{equation}
\end{subequations}
In view of~\eqref{eq:neutral-zone} and~\eqref{eq:twilight-zone},
this says that $\eps_2 = -\eps_1$.
\end{proof}

\begin{corl} 
\label{cr:auspicious-signs}
The interactions with fermions of the charged vector bosons must be
fully chiral, because $\tilde b_1 = \eps_1 b_1$.
\end{corl}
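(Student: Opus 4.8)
The plan is to promote the \emph{product} relation already in hand to a \emph{ratio}, and thereby read off $\tilde b_1/b_1$. From \eqref{eq:almost-there-e} together with \eqref{eq:neutral-zone} one has $2 b_1\tilde b_1 = \tilde b_3\cos\Th = \eps_1/4$, that is, $b_1\tilde b_1 = \eps_1/8$; this fixes the \emph{sign} of the ratio but not its modulus. To close the gap I would seek a second, ``diagonal'' relation of the shape $b_1^2 + \tilde b_1^2 = \tfrac14$, after which completing the square gives $(b_1 - \eps_1\tilde b_1)^2 = (b_1^2+\tilde b_1^2) - 2\eps_1 b_1\tilde b_1 = \tfrac14 - \tfrac14 = 0$, whence $\tilde b_1 = \eps_1 b_1$ at once.

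To produce that diagonal relation I would rerun the crossing bookkeeping of Lemma~\ref{lm:auspicious-signs}, but now collecting obstructions with the \emph{vector} field content $(w_- W_{+\ka} - w_+ W_{-\ka})\,\bar e\ga^\ka e$ (and its neutrino analogue) in place of the axial one. In the $(Q_1^F,S_1^F)$ sector this forces like couplings to be paired, $b_1$ with $b_1$ and $\tilde b_1$ with $\tilde b_1$; the decisive algebraic input is $\ga^5\ga^\ka\ga^5 = -\ga^\ka$, which collapses the axial--axial crossing back to a vector current and makes both diagonal crossings contribute with the \emph{same} sign, so that the $(Q_1^F,S_1^F)$ obstruction is proportional to $b_1^2 + \tilde b_1^2$ rather than to a difference. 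The matching $(Q_1^B,S_1^F)$ crossings are the $F^Z$- and $F^\ga$-contractions of \eqref{eq:capeat-qui-potest} against the neutral vector couplings, which supply the right-hand side; note that the dangerous $c_{[\la\mu]\ka}$ piece no longer obstructs, having been set to zero already in the proof of Lemma~\ref{lm:auspicious-signs}.

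The hard part is pinning that right-hand side to $\tfrac14$, because the neutral-current coefficients $b_3,b_4,b_5$ are not yet fixed at this stage of the argument. I would evade the difficulty through the \emph{neutrino} channel, where the photon decouples ($b_6=0$) so that only the $Z$ enters; combining the neutrino vector relation with the neutrino axial relation \eqref{eq:almost-there-nu}, the already-determined magnitude $|\tilde b_4| = 1/(4\cos\Th)$ from \eqref{eq:twilight-zone}, and the sign identity $\eps_2 = -\eps_1$ of the lemma, should force $b_1^2 + \tilde b_1^2 = \tfrac14$. The only genuinely subtle point is the $\ga^5$ sign-tracking: were the algebra to deliver $b_1^2 - \tilde b_1^2$ instead of the sum, the square would not close, so verifying that the two diagonal crossings reinforce rather than cancel is the crux of the whole corollary.
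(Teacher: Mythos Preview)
Your strategy has a genuine gap: the ``diagonal'' relation you want to harvest from the neutrino vector channel does \emph{not} involve $\tilde b_4$ but rather the still-unknown vector coupling~$b_4$. The $(Q_1^B,S_1^F)$ crossing with content $w_- W_{+\ka}\,\bar\nu\ga^\ka\nu$ pairs $i\cos\Th\,w_- W_+^\la F^Z_{\mu\la}$ with $b_4 Z_\ka\bar\nu\ga^\ka\nu$, yielding $b_4\cos\Th = -(b_1^2+\tilde b_1^2)$; the quantities $\tilde b_4$ and $\eps_2$ that you invoke from \eqref{eq:twilight-zone} simply do not enter this relation. At this stage of the argument $b_4$ is a free parameter, so the crossing determines $b_4$ in terms of $b_1^2+\tilde b_1^2$, not the other way around. (Indeed, the paper uses exactly this relation in the subsequent ``mopping up'' step to \emph{find}~$b_4$, only after the Corollary has fixed $b_1^2=\tilde b_1^2=1/8$.) The electron channel is worse, bringing in both $b_3$ and~$b_5$. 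Your square therefore does not close.

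The paper sidesteps the neutral vector couplings altogether by choosing a different field content, $w_-\phi_Z\,\bar e\,\nu$, which involves the escort field~$\phi_Z$. On the bosonic side the relevant $Q_1^B$ term is $\ihalf m_W^2\sec\Th\,w_-\,\del_\mu\phi_+\,\phi_Z$ from~\eqref{eq:bosons-shadow}, crossed with $i(m_e-m_\nu)b_1\,\phi_-\bar e\nu$; on the fermionic side one crosses $\tilde b_1 w_-\bar e\ga^\mu\ga^5\nu$ with the $\phi_Z$ terms $2im_e\tilde b_3\,\phi_Z\bar e\ga^5 e + 2im_\nu\tilde b_4\,\phi_Z\bar\nu\ga^5\nu$. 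All coefficients appearing here are already known, and cancellation gives the \emph{linear} identity
\[
(m_e - m_\nu)\,b_1 = 2\tilde b_1\bigl(2m_e\tilde b_3 + 2m_\nu\tilde b_4\bigr)\cos\Th
= (m_e - m_\nu)\,\eps_1\tilde b_1,
\]
whence $\tilde b_1=\eps_1 b_1$ directly. The moral is that the escort-field crossings carry the missing information; trying to extract the ratio from $W_\pm$--$Z$ vector crossings alone is underdetermined.
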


\begin{proof}
We now observe that $w_- \phi_Z \bar e \nu$ is produced either by the
term from \eqref{eq:bosons-shadow} of the form
$\ihalf m_W^2 \sec\Th\, w_- \,\del_\mu\phi_+ \,\phi_Z$, crossed with
$i(m_e - m_\nu) b_1 \phi_- \bar e \nu$ from \eqref{eq:fermis-mate}; or
by purely fermionic crossings, between
$\tilde b_1 w_- \bar e \ga^\mu \ga^5 \nu$ and the terms
$2im_e \tilde b_3 \phi_Z \bar e \ga^5 e
+ 2im_\nu \tilde b_4 \phi_Z \bar\nu \ga^5 \nu$. This, together with
\eqref{eq:neutral-zone} and~\eqref{eq:twilight-zone}, leads to
$$
i(m_e - m_\nu) b_1
= 2\tilde b_1(2i m_e \tilde b_3 + 2i m_\nu \tilde b_4) \cos\Th
= i(m_e - m_\nu) \eps_1 \tilde b_1 
$$
and the relation $\tilde b_1 = \eps_1 b_1$ follows.
\end{proof}

Of course, this procedure cannot tell us whether $\eps_1 = +1$ or
$\eps_1 = -1$. The second of these appears to be Nature's decision.

Equations~\eqref{eq:almost-there} now dictate that
$b_1^2 = \tilde b_1^2 = 1/8$. This determines $b_1$, up to a sign; we
choose $b_1 = - 1/2\sqrt{2}$.

\medskip

Observe that the proof of chirality requires the presence of a higgs,
at the level of tree graphs.
(Indeed, were $\tilde b_3 = 0$ or $\tilde b_4 = 0$,
it would follow that $b_1 = \tilde b_1 = 0$ too, and the whole term
$S_1^F$ would vanish. Thus none of these coefficients are zero, and
\eqref{eq:neutral-zone} is confirmed, with $c_0 \neq 0$ and 
$c_5 \neq 0$ as well.)
There are several consistency cases for the scalar particle of the
Standard Model. But it is hard to think of a simpler one. (We owe this
remark to Alejandro Ibarra.)

\subsection{Step 3: electric charge}

The coefficient $\mathrm{e} = g b_5$ of the coupling
$A_\mu \bar e \ga^\mu e$ in~\eqref{eq:fermis-mate} is just the 
electric charge. An important tenet of electroweak 
theory~\cite{Nagashima13} is that $\mathrm{e} = g \sin\Th$, with
$\Th$ being the Weinberg angle.

\begin{lema} 
\label{lm:light-brigade-charge}
The relation $g b_5 = g \sin\Th$ holds.
\end{lema}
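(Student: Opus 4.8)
The plan is to isolate $b_5$ through a single field content, namely $w_+ A_\ka \bar\nu \ga^\ka e$ (together with its charge conjugate $w_- A_\ka \bar e \ga^\ka \nu$). This is the minimal charge-neutral Wick monomial that ties the \emph{external} photon potential $A$ to a charged current, and it is exactly the content that encodes charge universality: the same charged-current coefficient $b_1$ will occur in every crossing that produces it, and will cancel out of the resulting relation, leaving $b_5$ expressed purely through the Weinberg angle. Concretely, I expect precisely two crossings to feed this content, one of each type, and I would set their combined obstruction in \eqref{eq:the-monster} to zero.

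The first is a $(Q_1^B,S_1^F)$ crossing: the $\ga$-$W$-$W$ vertex $i\sin\Th\, F^{W_-}_{\mu\la}\, w_+ A^\la$ sitting in \eqref{eq:capeat-qui-potest}, crossed with the charged current $b_1 W_{+\ka}\bar\nu\ga^\ka e$ of \eqref{eq:fermis-mate}, the two $W$'s being contracted. This is a ``dangerous'' crossing, governed by $\sO(F_{\8\la},W'_\ka)$ in the table \eqref{eq:tabula-rasa}. The point I would stress is that both potentially obstructive $\dl_l$-pieces drop out here: the term $\del^\mu(c_{[\la\mu]\ka}\,\dl_l)$ vanishes because $c_{[\la\mu]\ka}=0$ was already forced in Step~2, while the term $i l_\la\del_\ka\,\dl_l$ vanishes because its free index $\la$ is carried by the photon and $l_\la A^\la=0$ by transversality. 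Only the clean local piece $-ig_{\la\ka}\,\dl$ survives, and it contributes a term proportional to $\sin\Th\, b_1\, w_+ A_\ka\bar\nu\ga^\ka e\,\dl$.

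The second is a $(Q_1^F,S_1^F)$ crossing: the charged-current shadow $b_1 w_+\bar\nu\ga^\mu e$ from \eqref{eq:fermis-shadow}, crossed with the photon coupling $b_5 A_\ka\bar e\ga^\ka e$ of \eqref{eq:fermis-mate}, now contracting the \emph{electron} line. The electron obstruction is the elementary $\sO(\ga\psi,\bar\psi')=-\dl$ of \eqref{eq:delta-glitch}, so this crossing yields a term proportional to $b_1\, b_5\, w_+ A_\ka\bar\nu\ga^\ka e\,\dl$, with identical field content. String independence demands that the sum of the two vanish; since $b_1=-1/(2\sqrt2)\neq0$ is a common factor, it cancels and one reads off $b_5=\sin\Th$, that is $g b_5 = g\sin\Th$. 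The conjugate content $w_- A_\ka\bar e\ga^\ka\nu$ repeats the argument with the roles of the contracted lines interchanged and gives the same relation.

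The main obstacle I anticipate is bookkeeping rather than anything conceptual. One must verify that no \emph{other} crossing feeds this exact content: in particular the axial pieces (those carrying $\tilde b_1$, and the would-be $\tilde b_5 A_\ka\bar e\ga^\ka\ga^5 e$ term, which is absent because $\tilde b_5=0$ from the first-order Proposition) produce only the distinct monomial $w_\pm A_\ka\bar\nu\ga^\ka\ga^5 e$ and therefore cannot interfere. One must also confirm that the renormalization freedom is genuinely inert in this crossing, which is precisely what the combination $c_{[\la\mu]\ka}=0$ together with $l_\la A^\la=0$ secures. The only delicate arithmetic is the doubling factor from the $[x\otto x']$ symmetrization and the relative sign between the two obstructions; it is this sign that fixes the value to be $+\sin\Th$.
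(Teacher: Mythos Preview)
Your proposal is correct and follows essentially the same route as the paper: the paper isolates the field content $w_- A_\ka \bar e \ga^\ka \nu$ via exactly the two crossings you describe (one $(Q_1^B,S_1^F)$ through $\sO(F_{\8\la},W'_\ka)$, one $(Q_1^F,S_1^F)$ through the electron line), and explicitly remarks that the conjugate content $w_+ A_\ka \bar\nu \ga^\ka e$ you chose works \textit{mutatis mutandis}. Your discussion of why the $\dl_l$-pieces are inert (via $c_{[\la\mu]\ka}=0$ from Step~2 and transversality $l_\la A^\la=0$) is more explicit than the paper's terse treatment but matches its reasoning.
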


\begin{proof}
Consider the term $-i\sin\Th\, w_- \,A^\la \,F_{\mu\la}^+$ in
\eqref{eq:capeat-qui-potest}, crossed with the term
$b_1 W_{-\ka} \bar e \ga^\ka \nu$ in \eqref{eq:fermis-mate}; and the
crossing of $b_1 w_- \bar e \ga^\mu \nu$ with 
$b_5 A_\ka \bar e \ga^\ka e$. These are the only terms yielding the
field content $w_- A_\ka \bar e \ga^\ka \nu$. The total obstruction is
$$
(2 b_1 b_5 - 2 b_1 \sin\Th)
w_-(x,l) A_\ka(x,l) \bar e(x) \ga^\ka \nu(x) \,\dl(x - x').
$$
This vanishes if and only if $b_5 = \sin\Th$.
\end{proof}

The case could also have been made from the crossings with field 
content $w_+ A_\ka \bar\nu \ga^\ka e$, \textit{mutatis mutandis}.

\subsection{Step 4: mopping up}

We still have to determine the couplings $b_3$ and $b_4$ for the
neutral current. For that, we seek first the contributions with
content $w_- W_{+\ka} \bar e \ga^\ka e$. The crossings are of four
classes:
\begin{align*}
i \sin\Th\, w_- W_+^\la F_{\mu\la}
&\word{with} b_5 A_\ka \bar e \ga^\ka e,
\\
i \cos\Th\, w_- W_+^\la F^Z_{\mu\la} 
&\word{with} b_3 Z_\ka \bar e \ga^\ka e,
\\
b_1 w_- \bar e \ga^\mu \nu 
&\word{with} b_1 W_{+\ka} \bar\nu \ga^\ka e,
\\
\tilde b_1 w_- \bar e \ga^\mu \ga^5 \nu  
&\word{with} \tilde b_1 W_{+\ka} \bar\nu \ga^\ka \ga^5 e.
\end{align*}
The cancellation of the total obstruction now entails
$$
b_3 \cos\Th + \sin^2\Th = b_1^2 + \tilde b_1^2 = \frac{1}{4}\,,
\word{that is,}
b_3 = \frac{1}{4\cos\Th} - \frac{\sin^2\Th}{\cos\Th}\,.
$$
Similarly, from the crossing of $i \cos\Th\, w_- W_+^\la F^Z_{\mu\la}$
with $b_4 Z_\ka \bar\nu \ga^\ka \nu$, and the same fermionic 
terms as before, the contributions with content
$w_- W_{+\ka} \bar\nu \ga^\ka \nu$ cancel only if
$$
b_4 \cos\Th = - b_1^2 - \tilde b_1^2 = - \frac{1}{4}\,,
\word{and thus}  b_4 = - \frac{1}{4\cos\Th} \,.
$$
The expected result of the neutral current containing a right-handed
component has been obtained.

Finally, crossing the term 
$-\half m_Z \sec\Th w_Z \phi_Z \,\del_\mu\phi_4$
in~\eqref{eq:higgs-shadow-terms} with the terms
$\tilde c_0 \phi_4 \bar e \ga^5 e$ and
$\tilde c_5 \phi_4 \bar\nu \ga^5 \nu$ of~\eqref{eq:fermis-mate} 
gives rise to terms with content $w_Z \phi_Z \bar e \ga^5 e$
and $w_Z \phi_Z \bar\nu \ga^5 \nu$, respectively. 

The  crossings of $b_3 w_Z \bar e \ga^\mu e$ with 
$2i m_e \tilde b_3 \phi_Z \bar e \ga^5 e$ and
$b_4 w_Z \bar\nu \ga^\mu \nu$ with 
$2i m_\nu \tilde b_4 \phi_Z \bar\nu \ga^5 \nu$, respectively, vanish
of their own accord: see the table~\eqref{eq:bearing-the-cross}.
Therefore, they cannot cancel the former crossings, and so
$\tilde c_0 = \tilde c_5 = 0$ must hold. That is to say, the couplings
of the higgs \textit{are not chiral}.

\medskip

In conclusion, we exhibit the leptonic couplings (for one family) of
the SM, as derived from string independence. For definiteness, we take
$\eps_1 = -1$, which is the experimental fact. Here, then, is the
chirality theorem in full.

\begin{thm} 
\label{th:EW-chirality}
The couplings of electroweak bosons to the fermion sector of the
Standard Model are fully determined from string independence and the
choice of sign $\eps_1 = -1$. Given that choice, the absence of
obstructions to string independence, at tree level up to second order,
entails that:
\begin{align}
S_1^F = g \biggl\{ 
& - \frac{1}{2\sqrt{2}}\, W_{-\mu} \bar e \ga^\mu (1 - \ga^5) \nu
- \frac{1}{2\sqrt{2}}\, W_{+\mu} \bar\nu \ga^\mu (1 - \ga^5) e
+ \frac{1 - 4\sin^2\Th}{4\cos\Th}\, Z_\mu \bar e \ga^\mu e
\notag \\
& - \frac{1}{4\cos\Th}\, Z_\mu \bar e \ga^\mu \ga^5 e
- \frac{1}{4\cos\Th}\, Z_\mu \bar\nu \ga^\mu (1 - \ga^5) \nu
+ \sin\Th\, A_\mu \bar e \ga^\mu e
\notag \\
& + i \frac{m_e - m_\nu}{2\sqrt{2}}\,
(\phi_- \bar e \nu - \phi_+ \bar\nu e)
- i \frac{m_e + m_\nu}{2\sqrt{2}}\, 
(\phi_- \bar e \ga^5 \nu + \phi_+ \bar\nu \ga^5 e)
\notag \\
& - i \frac{m_e}{2\cos\Th}\, \phi_Z \bar e \ga^5 e 
+ i \frac{m_\nu}{2\cos\Th}\, \phi_Z \bar\nu \ga^5 \nu
+ \frac{m_e}{2m_W}\, \phi_4 \bar e e
+ \frac{m_\nu}{2m_W} \phi_4 \bar\nu \nu \biggr\}.
\label{eq:ave-atque-vale} 
\end{align}
\end{thm}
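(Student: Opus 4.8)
The plan is to recognize that Theorem~\ref{th:EW-chirality} is not a fresh result but the \emph{assembly} of every coefficient determination already secured. The natural starting point is the Ansatz \eqref{eq:fermis-mate}, in which the first-order constraints \eqref{eq:good-coeffs} have been used to eliminate the scalar couplings $c_1,\dots,c_4,\tilde c_1,\dots,\tilde c_4$ and to set $\tilde b_5 = 0$; what remains free there are the parameters $b_1, \tilde b_1, b_3, \tilde b_3, b_4, \tilde b_4, b_5, c_0, \tilde c_0, c_5, \tilde c_5$. The proof then substitutes the values pinned down, one at a time, by the second-order obstruction analysis, and reads off the chiral structure.

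The collation runs as follows. Lemmas~\ref{lm:first-higgs} and~\ref{lm:second-higgs} fix $c_0 = m_e/2m_W$ and $\tilde b_3 = \eps_1/(4\cos\Th)$; Lemma~\ref{lm:third-higgs} supplies the neutrino counterparts $c_5 = m_\nu/2m_W$ and $\tilde b_4 = \eps_2/(4\cos\Th)$. Lemma~\ref{lm:auspicious-signs}, through the twin relations \eqref{eq:almost-there-e} and \eqref{eq:almost-there-nu}, forces $\eps_2 = -\eps_1$, and Corollary~\ref{cr:auspicious-signs} then yields $\tilde b_1 = \eps_1 b_1$ together with $b_1^2 = \tilde b_1^2 = 1/8$; we fix the remaining sign by $b_1 = -1/2\sqrt{2}$. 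Lemma~\ref{lm:light-brigade-charge} gives $b_5 = \sin\Th$, and the mopping-up computations deliver $b_3 = (1 - 4\sin^2\Th)/(4\cos\Th)$, $b_4 = -1/(4\cos\Th)$, and the vanishing $\tilde c_0 = \tilde c_5 = 0$. Adopting $\eps_1 = -1$ as the empirical choice, so that $\tilde b_1 = -b_1 = 1/2\sqrt{2}$ and $\tilde b_4 = 1/(4\cos\Th)$, closes the list.

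The concluding step is to insert these numbers into \eqref{eq:fermis-mate} and pair each vector term with its axial partner. For the charged current, $b_1\ga^\mu + \tilde b_1\ga^\mu\ga^5 = -\frac{1}{2\sqrt{2}}\,\ga^\mu(1-\ga^5)$ produces the left-handed projector; the identical collapse occurs in the neutral neutrino current, where $b_4 = -\tilde b_4$. In the neutral electron current the vector and axial coefficients differ, leaving the familiar $(1-4\sin^2\Th)$ vector piece beside a pure axial term. The escort couplings to $\phi_\pm$ and $\phi_Z$, and the scalar higgs couplings to $\phi_4$, follow directly from \eqref{eq:good-coeffs} once the determined $b_1, \tilde b_1, \tilde b_3, \tilde b_4$ and $c_0, c_5$ are inserted, while $\tilde c_0 = \tilde c_5 = 0$ eliminates the pseudoscalar higgs couplings. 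This reproduces \eqref{eq:ave-atque-vale}.

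The real obstacle is not this bookkeeping but Lemma~\ref{lm:auspicious-signs}, on which the chirality statement entirely hinges. Its proof must confront the single ``dangerous'' crossing whose two-point obstruction $\sO(F_{\8\la}, A'_\ka)$ still carries the renormalization ambiguity $c_{[\la\mu]\ka}\,\dl_l$; string independence is exactly what forces $c_{[\la\mu]\ka} = 0$ while simultaneously producing $2b_1\tilde b_1 = \tilde b_3\cos\Th = -\tilde b_4\cos\Th$. It is the resulting \emph{opposite} signs of $\tilde b_3$ and $\tilde b_4$ that, via Corollary~\ref{cr:auspicious-signs}, collapse the charged current onto $(1-\ga^5)$; any misstep in tracking these signs, or in applying the transversality identity $l_\la W_\pm^\la = 0$ that discards the $i l_\la\,\del_\ka\dl_l$ term, would wreck the outcome. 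Everything else is disciplined substitution.
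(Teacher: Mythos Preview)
Your proposal is correct and mirrors the paper's own treatment: the theorem is stated there as the \emph{summary} of the chain Proposition~1, Lemmas~\ref{lm:first-higgs}--\ref{lm:third-higgs}, Lemma~\ref{lm:auspicious-signs}, Corollary~\ref{cr:auspicious-signs}, Lemma~\ref{lm:light-brigade-charge} and the Step~4 mopping-up, with no separate proof environment; the final display~\eqref{eq:ave-atque-vale} is obtained exactly by the substitution of the accumulated coefficient values into~\eqref{eq:fermis-mate} that you describe. Your identification of Lemma~\ref{lm:auspicious-signs} as the load-bearing step, and of the roles of transversality and the vanishing of $c_{[\la\mu]\ka}$ there, is likewise on the mark.
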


Amazingly, this differs from what is known from the standard treatment
by little more than a divergence.

\begin{schol} 
One can write 
$S_1^F = S_1^{F,\pt} + (\del V)$, where $S_1^{F,\pt}$ is almost
pointlike,
\begin{align}
S_1^{F,\pt} = g \biggl\{ 
& - \frac{1}{2\sqrt{2}}\, W^\pt_{-\mu} \bar e \ga^\mu (1 - \ga^5) \nu
- \frac{1}{2\sqrt{2}}\, W^\pt_{+\mu} \bar\nu \ga^\mu (1 - \ga^5) e
+ \frac{1 - 4\sin^2\Th}{4\cos\Th}\, Z^\pt_\mu \bar e \ga^\mu e
\notag \\
& - \frac{1}{4\cos\Th}\, Z^\pt_\mu \bar e \ga^\mu \ga^5 e
- \frac{1}{4\cos\Th}\, Z^\pt_\mu \bar\nu \ga^\mu (1 - \ga^5) \nu
+ \sin\Th\, A_\mu \bar e \ga^\mu e
\notag \\
& + \frac{m_e}{2m_W}\, \phi_4 \bar e e
+ \frac{m_\nu}{2m_W} \phi_4 \bar\nu \nu \biggr\};
\label{eq:whats-the-point} 
\end{align}
where $V^\mu$ is given by
\begin{align*}
V^\mu = g \biggl\{ 
& - \frac{1}{2\sqrt{2}}\, \phi_- \bar e \ga^\mu (1 - \ga^5) \nu
- \frac{1}{2\sqrt{2}}\, \phi_+ \bar\nu \ga^\mu (1 - \ga^5) e
+ \frac{1 - 4\sin^2\Th}{4\cos\Th}\, \phi_Z \bar e \ga^\mu e
\\
& - \frac{1}{4\cos\Th}\, \phi_Z \bar e \ga^\mu \ga^5 e
- \frac{1}{4\cos\Th}\, \phi_Z \bar\nu \ga^\mu (1 - \ga^5) \nu
\biggr\}.
\end{align*}
That is to say, the divergence of the expression $V$ sweeps away the
escort fields.
\end{schol}

We wrote ``almost pointlike'' because the fields in
\eqref{eq:whats-the-point} are pointlike, except for the photon field
$A_\mu$, which remains stringlike -- for the good reason that $W_\pm$
and~$Z$ can be lodged in a Hilbert space, whereas $A$ cannot.
Incidentally, this causes the interacting electron field to be
string-localized, thus making direct contact with the early literature
on stringlike fields \cite{Mandelstam62,Steinmann84}. A key
observation is that $(\del V)$ is \textit{not renormalizable} by power
counting, whereas $(\del Q)$~is.

\medskip

We rest our case. The only way to disprove it would be to find an
inconsistency coming from crossings not discussed so far. To verify
that this does not happen is a routine, if utterly tedious, exercise.

A last remark is in order. In the stringlike version of electroweak
theory, the eventual need of ``renormalizing'' the original
time-ordered product~$\T_0$, as in~\eqref{eq:bitter-tea-light},
arises. We only found that the skewsymmetric part of~$c_{\la\mu\ka}$
in that formula must vanish. Whether or not the theory requires a
time-ordered product different from~$\T_0$ remains an open question.

\section{Conclusion and outlook}
\label{sec:our-revels-now-are-ended}

To repeat ourselves: interactions of quanta should spring from a
simple underlying principle. Gauge field theory has played this
unifying role so far. That flows from the embarrassing clash of the
positivity axioms of Quantum Mechanics with the convenient description
of electromagnetic and other forces in terms of potentials. Not
unreasonably, the difficulty was elevated into a principle, and one
that put geometry in the saddle. The resulting top-down approach, with
the need of ``quantizing'' the Lagrangian description, has ridden us
(without much mercy) for many a year. It should be recognized,
however, that the gauge-plus-BRST-invariance framework is just a very
useful theoretical \textit{technology} to grapple with elementary
particle physics problems. Other theoretical technologies can, and
sometimes are and should be, used to address them. Stringlike field
theory is but one of those. With the early dividends that the
mentioned clash fades away, and unbounded-helicity particles take
their due place among quantum fields~\cite{MundSY04}.

To be sure, the extra variable complicates renormalized perturbation
theory and the proof of renormalizability of physical models in
general. Notwithstanding, the string independence principle becomes a
powerful guide to interacting models. Internal symmetries are shown as
consequences of quantum mechanics in the presence of Lorentz symmetry,
and a bottom-up construction of the string-local equivalent for
self-interaction of the Yang--Mills type ensues~\cite{WienNotes}.
Fortunately, as with the chirality theorem itself, all that and more
requires only construction of time-ordered products associated with
tree graphs.%
\footnote{There is nothing much new in this: in the seventies it was
generally understood that unitarity and renormalizability requirements
impose internal symmetries and at least the presence of one scalar
field, under appropriate circumstances~\cite{CornwallLT73,
CornwallLT74}. For heavy vector boson interactions, the
Higgs-mechanism shortcut replaced this wisdom in the textbooks.
Similar bottom-up arguments surface nowadays in \cite[Prob.~9.3 and
Sect.~27.5]{Schwartz14}.}

All that being said, the model expounded here is of course anomalous,
which manifests itself in~$S_3$. The cure is the same as in the
standard treatments. The computation of the chiral anomaly in our
framework will be published elsewhere.

A natural question is: to what extent, on the basis of string
independence of the couplings, chirality of the interaction with
fermions is a generic trait of physics models. We do not have a
comprehensive answer to this. From our treatment here one gathers that
models with only massless bosons like QCD are purely vectorial, on the
one hand. Limits of the SM, like the Georgi--Glashow model and the
Higgs--Kibble model, on the other hand, must exhibit chirality.

\appendix

\section{Proof of Eq.~\eqref{eq:the-secret-garden}}
\label{app:aliquid-obstat}

We prove here the identities 
\begin{align}
\sum_\vf \biggl( d_l \frac{\del S_1}{\del \vf} \vev{\T \vf\chi'} 
+ \frac{\del S_1}{\del \vf} \vev{\T (d_l\vf) \chi'} \biggr)
&= [\T (d_l S_1) \chi']_\tree \,,
\label{eq:secret-garden-1} 
\\
\sum_\psi \biggl( 
\del_\mu \frac{\del Q^\mu}{\del\psi} \vev{\T \psi \chi'} 
+ \frac{\del Q^\mu}{\del\psi} \vev{\T (\del_\mu\psi) \chi'} \biggr)
&= [\T (\del_\mu Q^\mu) \chi']_\tree \,. 
\label{eq:secret-garden-2} 
\end{align}
Using the identity 
$$
d_l S_1 = \sum_\vf \wick:\frac{\del S_1}{\del \vf}\, d_l\vf:, 
$$
the right-hand side of Eq.~\eqref{eq:secret-garden-1} is
$$
\sum_\vf \biggl[ 
\T \wick:\frac{\del S_1}{\del \vf}\, d_l\vf: \chi' \biggl]_\tree
= \sum_\psi \biggl\{ \sum_\vf
\wick:\frac{\del^2 S_1}{\del\vf\del\psi}\, d_l\vf: \biggr\} 
\vev{\T \psi \chi'}
+ \sum_\vf \frac{\del S_1}{\del\vf}\, \vev{\T (d_l\vf) \chi'}. 
$$
But the term in braces is just $d_l\frac{\del S_1}{\del\psi}$. Hence
the right-hand side of the above equation coincides with the left-hand
side of Eq.~\eqref{eq:secret-garden-1}. 

Similarly, using $\del_\mu Q^\mu 
= \sum_\vf \wick:(\del Q^\mu/\del\vf) \,\del_\mu\vf:$,
the right-hand side of Eq.~\eqref{eq:secret-garden-2} becomes
$$
\sum_\psi \biggl\{ \sum_\vf
\wick:\frac{\del^2 Q^\mu}{\del\vf\del\psi}\, \del_\mu\vf: \biggr\} 
\vev{\T \psi \chi'}
+ \sum_\vf \frac{\del Q^\mu}{\del\vf}\, 
\vev{\T (\del_\mu \vf) \chi'}, 
$$
which equals the left-hand side of Eq.~\eqref{eq:secret-garden-2}.

\section{Fermionic crossings}
\label{app:cross-purposes}

The crossings of fermionic type in Section~\ref{sec:at-laeva-malorum} 
are computed as follows. When crossing $\bar e \ga^\mu \nu$ with
$\bar\nu' \ga^k \ga^5 e'$, say, one meets two obstructions of 
type~\eqref{eq:delta-glitch}: contracting the neutrinos gives a factor
$\sO(\ga \nu, \bar\nu') = - \dl(x - x')$, whereas contraction of the 
electrons gives $\sO(e', \bar e \ga) = + \dl(x - x')$. Thus, the 
overall crossing yields a sum of two terms
$$
- \bar e(x) \ga^\ka \ga^5 e(x) \,\dl(x - x')
+ \bar\nu(x) \ga^\ka \ga^5 \nu(x) \,\dl(x - x').
$$
On the other hand, the crossing of $\bar e \ga^\mu \ga^5 e$ with
$\bar e' \ga^5 e'$, say, involving both 
$\sO(\ga e, \bar e')$ and $\sO(e', \bar e \ga)$, gives two equal 
contributions of $\bar e(x)\,e(x)\,\dl(x - x')$ to the total 
obstruction.

There are sixteen kinds of crossings in all, taking account of the
order of the contractions, and the presence or absence of $\ga^\ka$
and/or $\ga^5$ factors. Let $f$ denote a fermion ($\nu$
or~$e$, as the case may be). When computing the crossings, we label 
the contracted terms with stars: either $\ga^\mu f\,\bar f'$ is 
replaced by $\sO(\ga f, \bar f') = - \dl$, or $f'\,\bar f \ga^\mu$ is 
replaced by $\sO(f', \bar f \ga) = + \dl$. In the table which follows,
$\sg$ and $\tau$ denote uncontracted fermions:
\begin{align}
\bar \sg \ga^\mu \2{f} \, \2{\bar f'} \ga^\ka \tau' 
&\leadsto - \bar \sg \ga^\ka \tau \.\dl, 
& \2{\bar f} \ga^\mu \tau \, \bar \sg' \ga^\ka \2{f'} 
&\leadsto + \bar \sg \ga^\ka \tau \.\dl,
\notag \\
\bar \sg \ga^\mu \ga^5 \2{f} \, \2{\bar f'} \ga^\ka \ga^5 \tau' 
&\leadsto - \bar \sg \ga^\ka \tau \.\dl, 
& \2{\bar f} \ga^\mu \ga^5 \tau \, \bar \sg' \ga^\ka \ga^5 \2{f'} 
&\leadsto + \bar \sg \ga^\ka \tau \.\dl,
\notag \\
\bar \sg \ga^\mu \ga^5 \2{f} \, \2{\bar f'} \ga^\ka \tau' 
&\leadsto - \bar \sg \ga^\ka \ga^5 \tau \.\dl, 
& \2{\bar f} \ga^\mu \ga^5 \tau \, \bar \sg' \ga^\ka \2{f'} 
&\leadsto + \bar \sg \ga^\ka \ga^5 \tau \.\dl.
\notag \\
\bar \sg \ga^\mu \2{f} \, \2{\bar f'} \ga^\ka \ga^5 \tau' 
&\leadsto - \bar \sg \ga^\ka \ga^5 \tau \.\dl, 
& \2{\bar f} \ga^\mu \tau \, \bar \sg' \ga^\ka \ga^5 \2{f'} 
&\leadsto + \bar \sg \ga^\ka \ga^5 \tau \.\dl,
\notag \\[2\jot]
\bar \sg \ga^\mu \2{f} \, \2{\bar f'} \tau' 
&\leadsto - \bar \sg \tau \.\dl, 
& \2{\bar f} \ga^\mu \tau \, \bar \sg' \2{f'} 
&\leadsto + \bar \sg \tau \.\dl,
\notag \\
\bar \sg \ga^\mu \ga^5 \2{f} \, \2{\bar f'} \ga^5 \tau' 
&\leadsto + \bar \sg \tau \.\dl, 
& \2{\bar f} \ga^\mu \ga^5 \tau \, \bar \sg' \ga^5 \2{f'} 
&\leadsto + \bar \sg \tau \.\dl,
\notag \\
\bar \sg \ga^\mu \ga^5 \2{f} \, \2{\bar f'} \tau' 
&\leadsto + \bar \sg \ga^5 \tau \.\dl, 
& \2{\bar f} \ga^\mu \ga^5 \tau \, \bar \sg' \2{f'} 
&\leadsto + \bar \sg \ga^5 \tau \.\dl,
\notag \\
\bar \sg \ga^\mu \2{f} \, \2{\bar f'} \ga^5 \tau' 
&\leadsto - \bar \sg \ga^5 \tau \.\dl, 
& \2{\bar f} \ga^\mu \tau \, \bar \sg' \ga^5 \2{f'} 
&\leadsto + \bar \sg \ga^5 \tau \.\dl.
\label{eq:bearing-the-cross} 
\end{align}

\section{Proof of locality of the stringy fields}
\label{app:BW}

We prove here locality in the sense that $A_\mu(x,l)$ and
$A_\al(x',l')$ commute if the strings $\{x + tl\}$ and $\{x' + tl'\}$
are causally disjoint and not parallel. We begin with some geometric
considerations about wedge regions. These are Poincar\'e transforms of
the wedge
$$
W_1 := \set{x \in \bR^4 : x^1 > |x^0|}.
$$
Associated with $W_1$ are the one-parameter group $\La_1(\cdot)$ of
Lorentz boosts which leave $W_1$ invariant, and the reflection $j_1$
across the edge of the wedge. More specifically, $\La_1(t)$ acts as
$$
\twobytwo{\cosh t}{\sinh t}{\sinh t}{\cosh t} 
$$ 
and $j_1$ acts as the reflection on the coordinates $x^0$ and $x^{1}$,
leaving the other coordinates unchanged. For a general wedge
$W = L W_1 = a + \La W_1$ with $L = (a,\La)$, one defines the
corresponding boosts $\La_W(\cdot)$ and reflection $j_W$ by
$$
\La_W(t) := L\, \La_1(t) \,L^{-1}, \quad
j_W := L\, j_1 \,L^{-1}. 
$$
The reflection $j_W$ results from analytic extension of the (entire
analytic) matrix-valued function $\La_W(z)$ at $z = i\pi$.

Note that in the definition of covariance in
Section~\ref{sec:on-a-string} the string direction transforms only
under the homogeneous part of the Poincar\'e transformations. This
leads us to consider the mapping $(a,\La): l \mapsto \La l$ as the
natural action of the Poincar\'e group on the manifold of string
directions. In particular, if $W = a + \La W_1$ then
\begin{equation} 
\La_W(t) l = \La \La_1(t) \La ^{-1} l. 
\label{eq:Lambda-l} 
\end{equation}

\begin{lema} 
\label{lm:string-wedge}
\begin{enumerate}
\item[\textup{(i)}] 
A string $\{x + tl\}$ is contained in the closure of a wedge
$W = a + \La W_1$ if and only if $x$ and~$l$ are contained in the
closures of $W$ and~$\La W_1$ respectively.

\item[\textup{(ii)}] 
Suppose that the strings $\{x + tl\}$ and $\{x' + tl'\}$ are causally
disjoint and not parallel. Then there is a wedge $W$ whose closure
contains $\{x + tl\}$ and whose causal complement contains
$\{x' + tl'\}$.
The corresponding boosts respectively act as 
\begin{equation} 
\La_W(t) l = e^t l \quad\word{and}\quad \La_W(t) l' = e^{-t} l'.
\label{eq:LW-l} 
\end{equation}
\end{enumerate}
\end{lema}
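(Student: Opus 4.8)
Let me sketch a proof of the final statement (Lemma~\ref{lm:string-wedge}).

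For part (i): a wedge $W = a + \La W_1$ is the $L$-image of the reference wedge $W_1$, and $W_1$ is itself a \emph{cone} based at the origin together with its translate. The plan is to reduce everything to $W_1$ by applying $L^{-1}$, so that the string $\{x + tl\}$ becomes $\{L^{-1}x + t\,\La^{-1}l\}$. It therefore suffices to prove the statement for $W_1$, i.e.\ that $\{y + ts : t \geq 0\} \subset \ovl{W_1}$ if and only if $y \in \ovl{W_1}$ and $s \in \ovl{W_1}$ (recalling that under the convention $(a,\La):l\mapsto\La l$ the direction transforms by the homogeneous part alone). The ``if'' direction follows because $\ovl{W_1} = \set{x : x^1 \geq |x^0|}$ is closed under addition of elements and under positive scaling: if $y,s \in \ovl{W_1}$ then $(y+ts)^1 = y^1 + ts^1 \geq |y^0| + t|s^0| \geq |y^0 + ts^0|$. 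For ``only if'', taking $t = 0$ gives $y \in \ovl{W_1}$ immediately, and dividing $y + ts$ by $t$ and letting $t \to \infty$ forces $s \in \ovl{W_1}$ by closedness.

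For part (ii): here the hypothesis is that the two strings are causally disjoint and not parallel. The idea is that a pair of causally separated, non-parallel null rays determines a spacelike two-plane ``between'' them, which one takes as the edge of the wedge~$W$. Concretely, I would first use Poincar\'e covariance to normalize: translate and boost so that the string $\{x+tl\}$ lies in $\ovl{W_1}$ and points into the wedge, which by part~(i) amounts to placing $x \in \ovl{W_1}$ and $l \in \ovl{W_1}$. One then adjusts the remaining stabilizer of $W_1$ (the boosts $\La_1(\cdot)$ together with rotations in the edge) to push $\{x'+tl'\}$ into the causal complement $W_1'$; causal disjointness of the strings is exactly the geometric input that makes this possible, while non-parallelism guarantees that $l$ and $l'$ are not collinear so that the two rays can be separated to opposite sides of a common edge. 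Finally, the boost action~\eqref{eq:LW-l} is read off from~\eqref{eq:Lambda-l}: since $W$ is built so that $l$ lies along the expanding null direction of $\La_W(\cdot)$ and $l'$ along the contracting one, the matrix $\La_1(t) = \twobytwo{\cosh t}{\sinh t}{\sinh t}{\cosh t}$ acts on the null vectors $(1,\pm1,0,0)$ by the eigenvalues $e^{\pm t}$, giving $\La_W(t)l = e^t l$ and $\La_W(t)l' = e^{-t}l'$ after conjugating back by $\La$.

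The main obstacle is part~(ii): establishing that a \emph{single} wedge can be found whose closure contains one string and whose causal complement contains the other. The reductions in (i) are elementary closure-and-scaling arguments, but (ii) requires genuinely using both hypotheses together---causal disjointness to separate the supports, and non-parallelism to ensure the null directions can be aligned with the two distinguished (expanding/contracting) eigendirections of a boost. I expect the cleanest route is to characterize causal disjointness of null rays in terms of the relative position of their base points modulo the spacelike edge, and then to verify that the normalized configuration indeed places the second string in $W_1'$; the eigenvalue computation yielding~\eqref{eq:LW-l} is then purely mechanical.
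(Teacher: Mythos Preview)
Your treatment of part~(i) is correct and in fact more explicit than the paper, which simply cites Lemma~A.1 of~\cite{MundSY06}. The reduction to $W_1$ and the cone/closure argument are exactly what is needed.

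For part~(ii), however, your normalization strategy remains a programme rather than a proof, and the paper proceeds by a different and more direct route. Rather than first placing one string in $\ovl{W_1}$ and then trying to use the residual stabilizer of~$W_1$ to manoeuvre the second string into $W_1'$ --- a step you yourself flag as the main obstacle and do not carry out --- the paper constructs the wedge \emph{explicitly} from the given data:
\[
W := \tfrac{1}{2}(x + x') + W_{l,l'},
\qquad
W_{l,l'} := \set{y : (yl) < 0 < (yl')},
\]
whose causal complement is the closure of $\tfrac{1}{2}(x+x') + W_{l',l}$ (see~\cite{ThomasW97}). The verification that the two strings land where they should then reduces to the elementary characterization of causal disjointness of null strings: $(x-x')^2 < 0$ together with $\bigl((x'-x)\,l\bigr) \geq 0 \geq \bigl((x'-x)\,l'\bigr)$. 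Non-parallelism of $l,l'$ is what guarantees that $W_{l,l'}$ is a genuine wedge. This bypasses any stabilizer argument.

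Your derivation of~\eqref{eq:LW-l} is correct in mechanism but rests on the alignment of $l,l'$ with the null eigendirections of $\La_W(\cdot)$, which in your sketch is assumed rather than established. In the paper's construction the alignment is automatic, via the observation (which you implicitly use but do not state) that $l_{(+)}$ is, up to a positive factor, the \emph{only} future-directed lightlike vector in $\ovl{W_1}$; hence any $\La$ with $\La W_1 = W_{l,l'}$ must carry the span of $l_{(+)}$ to that of~$l$, and similarly $l_{(-)}$ to~$l'$. The eigenvalue computation $\La_1(t)\,l_{(\pm)} = e^{\pm t} l_{(\pm)}$ is then exactly the one you wrote down.
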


\begin{proof}
Item~(i) is the same as in Lemma~A.1. of~\cite{MundSY06}, whose proof
is valid for any direction $l \in \bR^4$. 

For item~(ii), take $W := \half(x + x') + W_{l,l'}$, where 
$W_{l,l'} := \set{y : (yl) < 0 < (yl')}$. The causal complement of~$W$
is the closure of $\half(x + x') + W_{l',l}$, see~\cite{ThomasW97}.
Furthermore, $l$ is -- up to a factor -- the only lightlike vector
contained in the upper boundary of $W_{l,l'}$ (which is a part of the
lightlike hyperplane $l^\perp$).

Using the elementary fact that $\{x + tl\}$ and $\{x' + tl'\}$ are
causally disjoint if and only if $(x - x')^2 < 0$ and
$\bigl( (x'-x) l \bigr) \geq 0 \geq \bigl( (x' - x) l' \bigr)$, one
readily verifies~\cite{Figueiredo17} that these strings are contained
in the respective wedges $\ovl{W}$ and $W'$, as claimed.

In terms of the lightlike vectors $l_{(\pm)} = (1,\pm 1,0,0)$, the
standard wedge $W_1$ is just $W_{l_{(+)},l_{(-)}}$. Since $l_{(+)}$
is, up to a factor, the only lightlike vector contained in the upper
boundary of $W_1$, the Lorentz transformation $\La$ maps the span of
$l_{(+)}$ onto the span of~$l$. Thus, 
$\La_W(t) l \equiv \La \La_1(t) \La^{-1} l$ is a multiple of
$\La \La_1(t) l_{(+)}$. But one readily verifies that
$\La_1(t) l_{(+)} = e^t\,l_{(+)}$. This proves the first equation in
\eqref{eq:LW-l}. The second is shown analogously, using that $\La$
maps the span of $l_{(-)}$ onto that of~$l'$.
\end{proof}

We now prove locality of the two-point function, recalling first that
the on-shell two-point function for not necessarily coinciding
directions is given, instead of~\eqref{eq:two-pointers-AA}, by 
\begin{equation}
M^{AA}_{\mu\nu}(p,l,l') 
= - g_{\mu\nu} + \frac{p_\mu l_\nu}{(pl)} + \frac{p_\nu l'_\mu}{(pl')}
- \frac{p_\mu p_\nu\,(l\,l')}{(pl) (pl')},
\label{eq:polarized-two-pointer} 
\end{equation}
see~\cite{MundDO17}. Given the two causally disjoint and non-parallel
strings, let $W$ be a wedge whose closure contains $\{x+tl\}$ and
whose causal complement contains $\{x' + tl'\}$ (as in the lemma), and
let $j_W$ and $\La_W(t)$ be the reflection and the boosts,
respectively, corresponding to $W$. Denote by $g_t$ the proper
non-orthochronous Poincar\'e transformation $\La_W(-t)j_W$. By
translation invariance of the two-point function, we may assume that
the edge of $W$ contains the origin. Then $x$ and $l$ are in the
closure of~$W$, while $x'$ and $l'$ lie in the causal complement
of~$W$. This implies that for $t$ in the strip $\bR + i(0,\pi)$ the
imaginary parts of both $g_tx$ and $g_{-t}x'$ lie in the closed
forward light cone -- see, for example, Eq.~(A.7) in~\cite{MundSY06}.

Now consider the relation  
\begin{equation}
\int d\mu(p)\, e^{-i(p(x' - g_t x))}\, M^{AA}_{\al\mu}(p,l',g_t l)
= \int d\mu(p)\, e^{-i(p(x - g_{-t} x'))}\, 
M^{AA}_{\al\mu}(-g_t p,l',g_t l),
\label{eq:Loc-TwoPt} 
\end{equation}
which is verified by applying the transformation $p \mapsto -g_t p$ on
the mass shell. (We use $-g_t$ instead of $g_t$, since the former is
an orthochronous Poincar\'e transformation, while the latter is not
orthochronous and maps the positive onto the negative mass shell.) We
may write $g_t^{-1} = g_{-t}$, since $j_W$ and $\La_W(t)$ commute. We
wish to extend the function $F(t)$ defined by~\eqref{eq:Loc-TwoPt}
analytically into the strip $\bR + i(0,\pi)$. To this end, note that
the Minkowski products of $g_t x$ and $g_{-t}x'$ with a covector $p$
in the mass shell both have positive imaginary parts due to the remark
before Eq.~\eqref{eq:Loc-TwoPt}. This implies that the functions
$|\exp{i(p g_tx)}|$ and $|\exp{i(p g_{-t}x')}|$ are uniformly bounded
by~$1$ over the strip. Furthermore, 
$M^{AA}_{\al\mu}(p,l',g_t l) = M^{AA}_{\al\mu}(p,l',l)$ since
$g_t l = e^t l$ by Eq.~\eqref{eq:LW-l}, and the factor $e^t$ cancels
as can be seen from Eq.~\eqref{eq:polarized-two-pointer}. By the same
token plus covariance, one obtains
$$
M^{AA}_{\al\mu}(-g_t p,l',g_t l) \equiv
(-g_t)_\al{}^\bt M^{AA}_{\bt\nu}(p,-g_{-t}l',-l) (-g_{-t})_\mu{}^\nu
= (g_t)_\al{}^\bt M^{AA}_{\bt\nu}(p,l',l) (g_{-t})_\mu{}^\nu.
$$ 

These facts imply that $F(t)$ has an analytic extension into the
strip, and Eq.~\eqref{eq:Loc-TwoPt} holds, by the Schwarz reflection
principle, also at $t = i\pi$. But $g_{\pm i\pi} = 1$, and thus at
$t = i\pi$ the left-hand side of Eq.~\eqref{eq:Loc-TwoPt} reduces, up
to a factor $(2\pi)^3$, to the vacuum expectation value
$\vev{A_\al(x',l') A_\mu(x,l)}$. On the right-hand side, one verifies
that $M^{AA}_{\al\mu}(p,l',l) = M^{AA}_{\mu\al}(p,l,l')$. Thus, at
$t = i\pi$ the right side of~\eqref{eq:Loc-TwoPt} reduces, up to a
factor $(2\pi)^3$, to $\vev{A_\mu(x,l) A_\al(x',l')}$. In short,
Eq.~\eqref{eq:Loc-TwoPt} at $i\pi$ is just the locality of the
two-point functions. This implies locality of the fields by a standard
argument in the proof of the Jost--Schroer theorem~\cite{StreaterW64}.

\section{A model of leptons}
\label{app:y-ole}

Engineering the GWS model from our formalism is not overly desirable.
But we do it here, as promised in the introduction. Let us reconsider
the three first lines of expression \eqref{eq:ave-atque-vale}. We
begin by introducing the notation
$$
\Psi_L := \begin{pmatrix} \nu_L \\ e_L \end{pmatrix} 
:= \begin{pmatrix} \half(1 - \ga^5) \nu \\[\jot]
\half(1 - \ga^5) e \end{pmatrix}.
$$
First,
$$
- \frac{1}{\sqrt2} W_{-\mu} \bar e \ga^\mu \frac{(1 - \ga^5)}{2} \nu
= - \frac{1}{\sqrt2}\, 
\Psibar_L \ga^\mu \twobytwo{0}{0}{W_{-\mu}}{0} \Psi_L
= - \half \Psibar_L \ga^\mu W_{-\mu} \tau_- \Psi_L;
$$
where $\tau_\pm = (\tau_1 \pm i\tau_2)/\sqrt2$, with $\tau_i$ denoting
here the Pauli matrices. Similarly,
$$
- \frac{1}{\sqrt2} W_{+\mu} \bar\nu \ga^\mu \frac{(1 - \ga^5)}{2} e
= - \half \Psibar_L \ga^\mu W_{+\mu} \tau_+ \Psi_L.
$$
The first two terms in~\eqref{eq:ave-atque-vale} are therefore of the
form
\begin{equation}
- \half g \Psibar_L \ga^\mu (W_{+\mu} \tau_+ + W_{-\mu} \tau_-) \Psi_L
= -\half g \Psibar_L \ga^\mu (W_{1\mu}\tau_1 + W_{2\mu}\tau_2) \Psi_L.
\label{eq:SM-technological} 
\end{equation}
Knowing, as we know, that the interaction is governed by a $U(2)$
symmetry, it is tempting to regard $\nu$ and~$e$ as isospin
components valued $+\half$ and~$-\half$, respectively. The
``right-handed leptons'' $e_R := \half(1 + \ga^5) e$ and
$\nu_R := \half(1 + \ga^5) \nu$ are isospin singlets. 

Denote by $Q$ the electric charge, so that $Q(e) = -1$ and 
$Q(\nu) = 0$, and isospin by~$I_3$. Observe that, putting
$\Psi = \Psi_L + \Psi_R$, the next four terms
of~\eqref{eq:ave-atque-vale} are rendered into:
\begin{align}
-g \sin\Th\, \Psibar \ga^\mu (A_\mu - Z_\mu \tan\Th) Q \Psi
- \frac{g}{\cos\Th}\, \Psibar_L \ga^\mu Z_\mu I_3 \Psi_L \,.
\label{eq:SM-ideological} 
\end{align}

In order to translate this into the received framework, with its
``covariant gauge transformation'' technology, we now introduce the
unobservable fields
$$
\begin{aligned}
W_{3\mu} &:= \phantom{+} \cos\Th\, Z_\mu + \sin\Th\, A_\mu
\\
B_\mu &:= - \sin\Th\, Z_\mu + \cos\Th\, A_\mu
\end{aligned}
\word{with inversion}
\begin{aligned}
A_\mu &= \phantom{+} \cos\Th\, B_\mu + \sin\Th\, W_{3\mu} 
\\
Z_\mu &= - \sin\Th\, B_\mu + \cos\Th\, W_{3\mu} \,.
\end{aligned}
$$
Then, with $g_B := g \tan\Th$, we can rewrite
\eqref{eq:SM-ideological} as
\begin{align}
- g_B \Psibar \ga^\mu B_\mu Q \Psi 
+ g_B \Psibar_L \ga^\mu B_\mu I_3 \Psi_L
- \half g \Psibar_L \ga^\mu W_{3\mu} \tau_3 \Psi_L \,.
\label{eq:ultra-ideological} 
\end{align}
One can now bring in the convention
$$
Y = 2(Q - I_3), \word{that is:}
Y(e_L) = Y(\nu_L) = -1; \ Y(e_R) = -2, \ Y(\nu_R) = 0.
$$
Then the first two summands in~\eqref{eq:ultra-ideological} are
rewritten as $-\half g_B \Psibar \ga^\mu B_\mu Y \Psi$; while the last
one together with the right hand side of~\eqref{eq:SM-technological}
yields $-\half g \Psibar_L (\ga^\mu \bW_\mu \cdot \tau) \Psi_L$.

\textit{In fine}, we have manufactured the interaction parts of the
GWS Lagrangian.

\section*{Acknowledgments}

We thank Michael D\"utsch first and foremost for discussions; perhaps
without him this paper would never have been written. The reviewers'
comments were helpful and improved the content and presentation of the
paper. We are also grateful to Alejandro Ibarra and Bert Schroer for
lively exchanges of views. 

This research was generously helped by the program ``Research in
Pairs'' of the Mathe\-matisches Forschungsinstitut Oberwolfach in
November~2015. The project has received funding from the European
Union's Horizon~2020 research and innovation programme under the Marie
Sk{\l}odowska-Curie grant agreement No.~690575. JM was partially
supported by CNPq, CAPES, FAPEMIG and Finep, and thanks the
Universidad de Costa~Rica for hospitality. JMG-B received funding from
Project FPA2015--65745--P of MINECO/Feder. JCV acknow\-ledges support
from the Vicerrector\'ia de Investigaci\'on of the Universidad de
Costa~Rica.

\end{document}